\newtheorem{theorem}{Theorem}
\newtheorem{proposition}[theorem]{Proposition}
\newenvironment{definition}[1][Definition.]{\begin{trivlist}
\item[\hskip \labelsep {\bfseries #1}]}{\end{trivlist}}
\newenvironment{example}[1][Example.]{\begin{trivlist}
\item[\hskip \labelsep {\bfseries #1}]}{\end{trivlist}}
\newenvironment{remark}[1][Remark.]{\begin{trivlist}
\item[\hskip \labelsep {\bfseries #1}]}{\end{trivlist}}
\DeclareMathOperator{\tr}{Tr}
\DeclareMathOperator{\identity}{I}
\newcommand{\ket} [1] {| #1 \rangle}
\newcommand{\bra} [1] {\langle #1 |}
\newcommand{\braket}[2]{\langle #1 | #2 \rangle}
\newcommand{\proj}[1]{\mbox{$|#1\rangle \!\langle #1 |$}}
\begin{document}


\title{Entanglement negativity and topological order}


\author{Yirun Arthur Lee}
\affiliation{Perimeter Institute for Theoretical Physics, Waterloo, Ontario, N2L 2Y5, Canada}

\author{Guifre Vidal}
\affiliation{Perimeter Institute for Theoretical Physics, Waterloo, Ontario, N2L 2Y5, Canada}


\date{\today}

\begin{abstract}
We use the entanglement negativity, a measure of entanglement for mixed states, to probe the structure of entanglement in the ground state of a topologically ordered system. Through analytical calculations of the negativity in the ground state(s) of the toric code model, we explicitly show that the pure-state entanglement of a region $A$ and its complement $B$ is the sum of two types of contributions: \textit{boundary} entanglement and \textit{long-range} entanglement. \textit{Boundary entanglement} is seen to be insensitive to tracing out the degrees of freedom in the interior of regions $A$ and $B$, and therefore it only entangles degrees of freedom in $A$ and $B$ that are close to their common boundary. We recover the well-known result that boundary entanglement is proportional to the size of each boundary separating $A$ and $B$ and it includes an additive, universal correction. The second, long-range contribution to pure-state entanglement appears only when $A$ and $B$ are non-contractible regions (e.g. on a torus) and it is seen to be destroyed when tracing out a non-contractible region in the interior of $A$ or $B$. In the toric code, only the long-range contribution to the entanglement depends on the specific ground state under consideration.
\end{abstract}

\pacs{}

\maketitle

\section{Introduction}


The study of entanglement in quantum many-body systems has in recent years become a highly interdisciplinary endeavor. By studying the scaling of ground state entanglement, information about the universality class of both quantum phases transitions \cite{C1,C2,C3} and topologically ordered phases of matter \cite{Hamma, TEE1, TEE2} can be obtained (see \cite{Reviews} for reviews). Moreover, insights into the structure of entanglement has led to new ways of describing and numerically simulating many-body states \cite{TensorNetworks}.



Much of our present understanding of many-body entanglement is based on studying the entanglement between a region $A$ of a system and its complement $B$. The state $\ket{\Psi}$ of a many-body system can be canonically written in its \textit{Schmidt decomposition}
\begin{equation}\label{eq:Schmidt}
    \ket{\Psi} = \sum_{\alpha} \sqrt{p_{\alpha}} \ket{\phi_{\alpha}}_A\otimes \ket{\varphi_{\alpha}}_B,
\end{equation}
where $\{\ket{\phi_{\alpha}}_A\}$ and $\{\ket{\varphi_{\alpha}}_B\}$ are sets of orthonormal states in $A$ and $B$,  $\braket{\phi_{\alpha}}{\phi_{\alpha'}} = \braket{\varphi_{\alpha}}{\varphi_{\alpha'}} = \delta_{\alpha,\alpha'}$, and $\sqrt{p_{\alpha}}$ are the Schmidt coefficients, with $p_\alpha\geq 0$, $\sum_{\alpha} p_{\alpha} = 1$. It follows from Eq. \ref{eq:Schmidt} that the reduced density matrices $\rho_A \equiv \tr_B \proj{\Psi}$ and $\rho_B \equiv \tr_A \proj{\Psi}$ for regions $A$ and $B$ have the same eigenvalue spectrum,
\begin{eqnarray}\label{eq:reduced}
    \rho_A = \sum_{\alpha} p_{\alpha} \ket{\phi_{\alpha}}\bra{\phi_{\alpha}},  \\
    \rho_B = \sum_{\alpha} p_{\alpha} \ket{\varphi_{\alpha}}\bra{\varphi_{\alpha}}.
\end{eqnarray}
We can then use the von Neumann entropy of $\rho_A$,
\begin{equation}\label{eq:vonNeumann}
S(\rho_A) \equiv -\tr\left( \rho_A \log_2 (\rho_A) \right) = - \sum_\alpha p_{\alpha} \log_2 (p_{\alpha}),
\end{equation}
and, more generally, the Renyi entropy of order $q$,
\begin{equation}\label{eq:Renyi}
S_q(\rho_A) \equiv \frac{1}{1-q}  \log_2 \tr \left( (\rho_A)^{q}  \right) = \frac{1}{1-q} \log_2 \left( \sum_\alpha (p_{\alpha})^q \right),
\end{equation}
to quantify the amount of entanglement between $A$ and $B$.

Consider now a many-body system divided into three regions: regions $A$ and $B$, and the rest of the system, $C$. Assume that $A\cup B \cup C$ is in a pure state $\ket{\Psi}$, and let $\rho_{AB} \equiv \tr_C \proj{\Psi}$ be the state of $A\cup B$. If part $A\cup B$ is entangled with $C$, then $\rho_{AB}$ is a mixed state. We would like to quantify the entanglement between $A$ and $B$ contained in $\rho_{AB}$. However, we can no longer use the entropy of the reduced density matrix $\rho_A = \tr_B (\rho_{AB})$ (or $\rho_B \equiv \tr_A (\rho_{AB})$) to do so, since this entropy quantifies the entanglement between $A$ and $B\cup C$ (respectively, between $B$ and $A\cup C$). Although it is still possible to use entropy-based measures, such as the mutual information $S(\rho_A) + S(\rho_B) - S(\rho_{AB})$, to characterize the total amount of correlations between $A$ and $B$, these measures cannot distinguish between quantum entanglement and classical correlations.


Given the mixed state $\rho_{AB}$, with components
\begin{equation}\label{eq:rhoAB}
    \rho_{AB} = \sum_{ijkl} (\rho_{AB})_{ijkl} \ket{i_A \otimes j_B}\bra{k_A \otimes l_B},
\end{equation}
its partial transposition $\rho^{T_A}_{AB}$ is defined to have coefficients
\begin{equation}\label{eq:transpose}
    (\rho_{AB}^{T_A})_{kjil} \equiv (\rho_{AB}^{T_A})_{ijkl}.
\end{equation}
A sufficient condition for $\rho_{AB}$ to be entangled is that its partial transposition has at least one negative eigenvalue $n<0$ \cite{Peres}, that is,
\begin{equation}\label{eq:Peres}
    \rho_{AB}^{T_A} \not\geq 0.
\end{equation}
Based on this observation, we could use the sum of negative eigenvalues $\{n_i\}$ of $\rho_{AB}^{T_A}$, called \textit{negativity} $\mathcal{N}(\rho_{AB})$ of $\rho_{AB}$,
\begin{equation}\label{eq:negativity}
    \mathcal{N}(\rho_{AB}) \equiv \sum_{i} |n_i|,
\end{equation}
to characterize the entanglement between regions $A$ and $B$. The negativity, first introduced in \cite{Zyczkowski}, is of interest as a measure of mixed-state entanglement because it can only decrease under local manipulations of subsystems $A$ and $B$ \cite{VidalWerner,Lee,Eisert,Plenio}, as shown by several authors in the context of quantum information \cite{History}. As described in \cite{VidalWerner}, an equivalent quantity, the logarithmic negativity
\begin{equation}\label{eq:log-negativity}
    E_{\mathcal{N}} \equiv \log_2 (1+2\mathcal{N}),
\end{equation}
is an upper bound to how much pure-state entanglement can be distilled from a mixed state, and therefore it has an operational meaning.

Recently, using the replica trick, Ref. \cite{CFT1} presented analytical calculations of the negativity of two intervals in 1+1 quantum field theories (see also \cite{CFT2, MonteCarlo} for related numerical computations). These calculations are important because they show that the negativity can be used to extract universal properties of quantum critical systems, possibly beyond what has been possible through entropy calculations.

The goal of this paper is to use the negativity to investigate the structure of entanglement in the ground state of a two-dimensional system with topological order. Specifically, we consider the toric code model \cite{Kitaev}, which can be solved exactly. The entanglement between a region $A$ and the rest of the system in the ground state of the toric code model has already been characterized previously using von Neumann \cite{Hamma} and Renyi \cite{Flammia} entropies \cite{Hamma,Flammia}. Here, we use the ability to compute the entanglement of mixed states to investigate its distribution in space, in the sense that we explicitly identify, within regions $A$ and $B$, the specific location of the entangled degrees of freedom.

We provide an analytical calculation of the negativity for a number of choices of regions $A$ and $B$, and use them to discriminate between two types of contributions to the entanglement of $\rho_{AB}$: \textit{boundary} entanglement and \textit{long-range} entanglement. Boundary entanglement entangles degrees of freedom that are close to the boundary between $A$ and $B$; it is proportional to the size of the boundary; and it includes the well-known topological term \cite{Hamma,TEE1,TEE2}. Long-range entanglement occurs only when the ground subspace is degenerate (e.g. on a torus), and its amount depends on the specific ground state of the system. While boundary entanglement survives the tracing out of bulk degrees of freedom inside regions $A$ and $B$, long-range entanglement is destroyed when a non-contractible regions in the interior of region $A$ or region $B$ is traced-out. The present decomposition of entanglement into different types of contributions is already implicitly present in previous papers, such as Ref. \cite{TEE1, Previous}. By studying the negativity, we can make this decomposition more concrete and explicitly identify the spatial origin of each contribution.

The rest of the paper is organized as follows. Section \ref{sect:background} reviews some background material. Sect. \ref{sect:contractible} analyzes the simple cases where region $A$ is contractible, in which case only boundary entanglement is present. Sect. \ref{sect:non-contractible} analyzes the case where both $A$ and $B$ are non-contractible regions of a torus, and shows the existence also of long-range entanglement.

\section{Background Material}
\label{sect:background}

In this section we briefly review the logarithmic negativity, the toric code model and its ground states, and describe the types of regions into which we will divide the lattice.

\subsection{Negativity}

The sum of negative eigenvalues of $\rho_{AB}^{T_A}$ can be seen to be equal to $(\tr |\rho_{AB}^{T_A}|-1)/2$, where $|O| \equiv \sqrt{O^{\dagger}O}$ \cite{VidalWerner}.

\begin{definition}
The \emph{logarithmic negativity} is
\begin{equation}\label{eq:negativity:definition}
E_\mathcal{N}^{A\mid B}\left(\rho_{AB}\right)
\equiv \log_2\tr\left|{\rho_{AB}}^{T_A}\right|.
\end{equation}
\end{definition}
If $E_\mathcal{N}^{A\mid B}\left(\rho_{AB}\right) > 0$, then $\rho_{AB}$ is entangled, as it follows from Peres criterion, Eq. \ref{eq:Peres}. It can be further seen \cite{VidalWerner} that the logarithmic negativity is additive, that is
\begin{equation}\label{eq:additivity}
E_{\mathcal{N}}(\varrho_{A_1B_1} \otimes \sigma_{A_2B_2}) = E_{\mathcal{N}}(\varrho_{A_1B_1}) + E_{\mathcal{N}}(\varrho_{A_2B_2}),
\end{equation}
and that for a pure state $\ket{\Psi}_{AB}$ it reduces to the Renyi entropy of index $q=1/2$,
\begin{equation}\label{eq:PureStates}
    E_\mathcal{N}(\proj{\Psi}) = S_{1/2}(\rho_{A}) = 2\log_2\left((\rho_A)^{1/2}\right).
\end{equation}
Also see appendix \ref{sec:appd:ext of log neg} for a general relation for Renyi entropies of any order.

\subsection{Toric code model}

To define the toric code model, we consider a generic graph (in practice a square lattice with various boundary conditions), and assign a two-level system or qubit to each edge.

\begin{figure}[t]
\includegraphics[width=8.0cm]{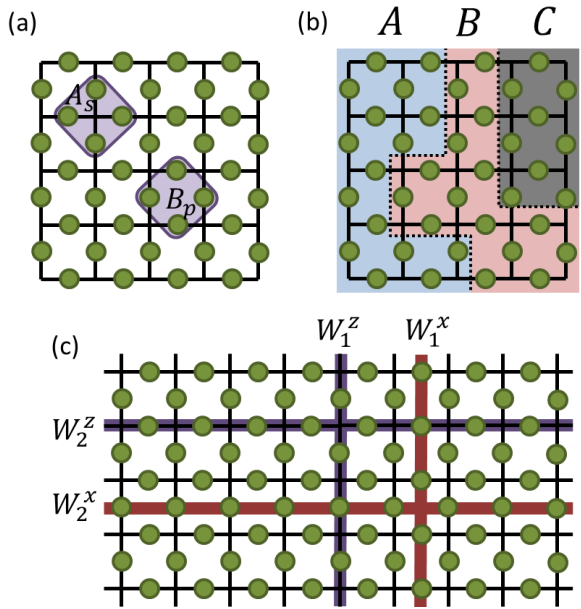}
\caption{
(a) Toric code on a lattice made of 40 sites, represented by circles. The support of a star term $A_s$ and of a plaquette term $B_p$, eq.\nobreakspace \textup {(\ref {eq:toric code:model:StarPlaquette})} is represented.
(b) Example of regions $A$, $B$, and $C$, such that no star or plaquette operator acts on more than two regions simultaneously.
(c) Square lattice with the topology of a torus, showing the support of loop operators $W_1^z$, $W_1^x$, $W_2^z$, and $W_2^x$.
}
\label{fig:ToricCode}
\end{figure}

\begin{definition}
For each vertex~$s$ and face~$p$ in the graph, we define the \emph{star}~$A_s$ and \emph{plaquette}~$B_p$ operators
\begin{equation}\label{eq:toric code:model:StarPlaquette}
A_s = \prod_{i\in\textnormal{legs}\left(s\right)} \sigma^x_i
\quad\quad\quad\quad\quad
B_p = \prod_{i\in\textnormal{boundary}\left(p\right)} \sigma^z_i
\end{equation}
where $\sigma_i$ denotes the Pauli operator acting on spin~$i$,  see Fig. \ref{fig:ToricCode}(a).
\end{definition}

The \emph{toric code} on a 2D graph with $n$ edges is then described by the Hilbert space~$\mathcal{H} = \left(\mathbb{C}^2\right)^{\otimes n}$ and by the local Hamiltonian
\begin{equation}
H=-U\sum_{s}A_s-J\sum_{p}B_p
\end{equation}
with $U>0$ and $J>0$, and where $A_s$~and~$B_p$ are operators acting on the edges ending at vertex $s$ and the edges surrounding face $p$, respectively.

\subsection{Ground states}

Given a complete set of $\kappa$ independent non-contractible loops on the graph [$\kappa = 2^\mathfrak{g}$ for a surface of genus $\mathfrak{g}$], we define the corresponding loop operators $\{W^z_i\}$, with $i=1, \cdots, \kappa$, where each $W_i^z$ is a loop of $\sigma^z$ operators. The ground state which is the $+1$ eigenstate of all non-contractible $\sigma^z$ loop operators~$W^z_i$ is then
\begin{equation}\label{eq:psi0}
\left|\psi_0\right\rangle
= N
\prod_{\substack{\textnormal{indep.}\\B_p}} \left(\frac{\identity+B_p}{2}\right)
\prod_{\substack{\textnormal{loop ops,}\\i=1}}^\kappa \left(\frac{\identity+W^z_i}{2}\right)
\left|+\cdots+\right\rangle,
\end{equation}
where $N$ is a normalization constant such that $\left\langle\psi_0\mid\psi_0\right\rangle = 1$.
This state is the equal weight superposition of all loop $\sigma^z$ operators on the lattice acting on the reference state~$\left|+\cdots+\right\rangle$.

Note that this ground state is identical to the more familiar form
\begin{equation}
\left|\psi_0\right\rangle
= N
\prod_{\substack{\textnormal{indep.}\\A_s}} \left(\frac{\identity+A_s}{2}\right)
\left|0\cdots0\right\rangle
.
\end{equation}

Other ground states can be obtained by acting on $\left|\psi_0\right\rangle$ with linear combinations of products of~$W^x_i$ operators. That is, a generic ground state can be written as
\begin{equation}\label{eq:generic}
\left|\psi\right\rangle
= \sum_{k_1,k_2,\cdots,k_\kappa=0}^1 c_{k_1 \cdots k_\kappa}
\left( W^x_1 \right)^{k_1} \left( W^x_2 \right)^{k_2} \cdots \left( W^x_\kappa \right)^{k_\kappa}
\left|\psi_0\right\rangle
\end{equation}
where $\sum_{\vec{k}} \left|c_{\vec{k}}\right|^2 = 1$ so that $\left\langle\psi\mid\psi\right\rangle = 1$.

\begin{example}
For a torus $\mathfrak{g}=1$, the ground state~$\left|\psi_0\right\rangle$ is
\begin{equation}
\left|\psi_0\right\rangle
= N
\prod_{\substack{\textnormal{indep.}\\B_p}} \left(\frac{\identity+B_p}{2}\right)
\left(\frac{\identity+W^z_1}{2}\right) \left(\frac{\identity+W^z_2}{2}\right)
\left|+\cdots+\right\rangle
\end{equation}
where the two loops $W^z_1$ and $W^z_2$ go around the two different radii of the torus, see Fig. \ref{fig:ToricCode}(b).

The general ground state can then be written as
\begin{equation} \label{eq:generic2}
\left|\psi\right\rangle
= \sum_{k_1,k_2=0}^1 c_{k_1 k_2}
\left( W^x_1 \right)^{k_1} \left( W^x_2 \right)^{k_2}
\left|\psi_0\right\rangle
\end{equation}
where $\left|c_{00}\right|^2 + \left|c_{01}\right|^2 + \left|c_{10}\right|^2 + \left|c_{11}\right|^2 = 1$.
\end{example}

In Sect. \ref{sect:non-contractible}, we will also use an alternative basis of ground states $\ket{\psi_i}$ on the torus, where $i=I,e,m,em$ is the anyonic flux threading the interior of the torus in the horizontal direction, see Eqs. \ref{eq:psiI}-\ref{eq:psiemem}.

\subsection{Regions}

In the next sections we will divide the lattice into two or more regions, see Fig. \ref{fig:ToricCode}(b). We will only consider divisions of the lattice into regions such that the star operators $A_s$ and plaquette operators $B_p$ only act non-trivially on at most two regions each. Although this restriction does not appear to be essential in order to perform an exact calculation of the negativity, it simplifies the derivation significantly.

\begin{figure}[t]
\includegraphics[width=8.6cm]{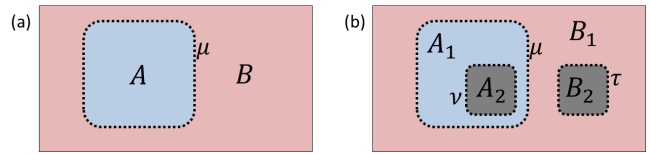}
\caption{
(a) Partition of a generic lattice (of arbitrary topology) into a contractible region $A$ and its complement $B$. Index $\mu$ labels loop configurations at the boundary between $A$ and $B$.
(b) Refined partition, in which $A$ further decomposes into $A_1$ and $A_2$, and $B$ into $B_1$ and $B_2$. Notice that $A_2$ shares no border with $B$, and $B_2$ shares no border with $A$. Indices $\mu$, $\nu$, and $\tau$ label loop configurations at the corresponding boundaries.
}
\label{fig:Regions1}
\end{figure}

\section{Region $A$ is contractible}
\label{sect:contractible}

In this section we analyze a setting where region $A$ is contractible, whereas region $B$ (the complement of $A$, such that $A \cup B$ is the whole lattice) is arbitrary, see Fig. \ref{fig:Regions1}(a). [Here we only discuss explicitly a contractible region $A$ that is simply connected, and later simply explain how the results generalize to an arbitrary number of simply connected, contractible regions]. First we compute the negativity for the pure state of $A\cup B$. Then, after considering a refined partition of the system into regions $A_1$, $A_2$, $B_1$, and $B_2$, where $A = A_1 \cup A_2$ and $B = B_1 \cup B_2$, see Fig. \ref{fig:Regions1}(b), we compute the negativity for the mixed state of $A_1\cup B_1$.

\subsection{Schmidt decomposition of a bipartition}

Consider the bipartition $A|B$ in Fig. \ref{fig:Regions1}(a), where region $A$ is contractible and $A\cup B$ is an arbitrary surface.

\begin{proposition}\label{prop:one region contr:schmidt decomposition:decomposition of GS1}
The ground state~$\left|\psi_0\right\rangle$ of Eq. \ref{eq:psi0} has the Schmidt decomposition
\begin{equation}\label{eq:one region contr:schmidt decomposition:decomposition of GS1}
\left|\psi_0\right\rangle
= {\mu_\textnormal{max}}^{-\frac{1}{2}} \sum_{\mu=1}^{\mu_\textnormal{max}}
\left|e_\mu\right\rangle_A \otimes \left|f_\mu\right\rangle_B
\end{equation}
where $\mu_\textnormal{max} = 2^{n_{AB}-1}$ such that $n_{AB}$ is the number of plaquettes spanning the boundary between~$A$ and~$B$.
Also, $\left\langle e_{\mu'} \mid e_\mu\right\rangle = \left\langle f_{\mu'} \mid f_\mu\right\rangle = \delta_{\mu\mu'}$.
\end{proposition}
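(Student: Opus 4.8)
The plan is to exploit the group structure of $\ket{\psi_0}$. Using the star form $\ket{\psi_0}=N\prod_{\textnormal{indep.}A_s}\frac{\identity+A_s}{2}\ket{0\cdots0}$ (one may equivalently work with the group generated by the $B_p$ and the $W^z_i$ acting on $\ket{+\cdots+}$, whichever renders the boundary count most transparent), one has $\ket{\psi_0}=|\mathcal{G}|^{-1/2}\sum_{g\in\mathcal{G}}g\ket{0\cdots0}$, where $\mathcal{G}$ is the abelian group generated by the star operators. Since each $A_s$ is a product of $\sigma^x$'s, the orbit $\{g\ket{0\cdots0}\}_{g\in\mathcal{G}}$ consists of $|\mathcal{G}|$ distinct, mutually orthogonal computational-basis states, so $\ket{\psi_0}$ is a uniform superposition over that orbit. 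This reduces everything to combinatorics of $\mathcal{G}$ relative to the bipartition.

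Next I would split the qubits according to $A\cup B$ and set $\mathcal{G}_A=\{g\in\mathcal{G}:\mathrm{supp}(g)\subseteq A\}$, $\mathcal{G}_B$ analogously. These are commuting subgroups with trivial intersection, so $\mathcal{G}_A\mathcal{G}_B\cong\mathcal{G}_A\times\mathcal{G}_B$ is a subgroup of $\mathcal{G}$ of index $\mu_\textnormal{max}:=[\mathcal{G}:\mathcal{G}_A\mathcal{G}_B]$. Picking coset representatives $h_1,\dots,h_{\mu_\textnormal{max}}$ and factoring each by its support, $h_\mu=h^A_\mu h^B_\mu$, every $g\in\mathcal{G}$ is uniquely $h_\mu a b$ with $a\in\mathcal{G}_A$, $b\in\mathcal{G}_B$, and $g\ket{0\cdots0}=\bigl(h^A_\mu a\ket{0\cdots0}_A\bigr)\otimes\bigl(h^B_\mu b\ket{0\cdots0}_B\bigr)$. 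Summing the two inner sums over $a$ and $b$ factorizes the state, and using $|\mathcal{G}|=\mu_\textnormal{max}|\mathcal{G}_A||\mathcal{G}_B|$ one gets $\ket{\psi_0}=\mu_\textnormal{max}^{-1/2}\sum_\mu\ket{e_\mu}_A\otimes\ket{f_\mu}_B$ with $\ket{e_\mu}_A:=|\mathcal{G}_A|^{-1/2}\sum_{a\in\mathcal{G}_A}h^A_\mu a\ket{0\cdots0}_A$ and $\ket{f_\mu}_B$ defined symmetrically. Orthonormality of $\{\ket{e_\mu}\}$ (and of $\{\ket{f_\mu}\}$) then follows because each $\ket{e_\mu}$ is uniform over $|\mathcal{G}_A|$ distinct basis states, and for $\mu\neq\mu'$ these supports are disjoint: if $h^A_\mu a$ and $h^A_{\mu'}a'$ flipped the same edges of $A$, then $(h_\mu a)(h_{\mu'}a')^{-1}$ would be supported in $B$, hence in $\mathcal{G}_B$, contradicting that $h_\mu,h_{\mu'}$ lie in distinct cosets of $\mathcal{G}_A\mathcal{G}_B$. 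This establishes the stated Schmidt decomposition; it remains only to evaluate $\mu_\textnormal{max}$.

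The main work — and the main obstacle — is the count $\mu_\textnormal{max}=2^{n_{AB}-1}$. The quotient $\mathcal{G}/\mathcal{G}_A\mathcal{G}_B$ is controlled by the ways a configuration can cross the cut: in the $\sigma^z$-loop picture, $\ket{\psi_0}$ is a uniform superposition over closed-loop configurations, and two configurations define the same Schmidt component precisely when their symmetric difference decomposes into a loop configuration inside $A$ and one inside $B$. The possible crossing patterns live on the $n_{AB}$ edges straddling the boundary (dually, the $n_{AB}$ plaquettes the cut passes through), and they are subject to exactly one $\mathbb{Z}_2$ constraint — a closed loop must cross the $A$–$B$ boundary an even number of times — so there are $2^{n_{AB}}/2=2^{n_{AB}-1}$ classes; equivalently, in the star description one shows that the product of all "boundary" stars lies in $\mathcal{G}_A\mathcal{G}_B$ (cancel every qubit not touching $B$ by also multiplying in the stars strictly inside $A$), giving precisely one relation among the $n_{AB}$ generators of the quotient. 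The delicate points are (i) verifying that this is the \emph{only} independent relation, which is where contractibility and simple connectedness of $A$ enter (no non-contractible loop can hide in the boundary data, and each crossing class is realized by the same number of interior completions, so the Schmidt weights are genuinely uniform), and (ii) matching the combinatorial "number of boundary stars/crossed edges" with the quantity $n_{AB}$ = "plaquettes spanning the boundary" for the region conventions of Fig.~\ref{fig:Regions1}. Finally I would note the generalization to several disjoint simply connected contractible regions: $\mathcal{G}_A$ factors over the components, each contributing its own crossing count, with a single global constraint remaining.
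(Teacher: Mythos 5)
Your proposal is correct and is essentially the paper's own argument recast in stabilizer-group language: your cosets of $\mathcal{G}_A\mathcal{G}_B$ in $\mathcal{G}$ are exactly the paper's equivalence classes of boundary-crossing loops modulo deformations supported in $A$ or $B$ alone, your single relation (the product of all boundary generators factorizes over $A$ and $B$) is the paper's observation that only $n_{AB}-1$ of the $n_{AB}$ boundary plaquettes are independent, and your $\ket{e_\mu}$, $\ket{f_\mu}$ built from split coset representatives times interior sums match the paper's construction. The only substantive divergence is that you default to the star/$\sigma^x$ picture on $\ket{0\cdots 0}$, which forces the extra bookkeeping step you flag as point (ii) of matching boundary stars to boundary plaquettes --- working directly in the plaquette-loop picture on $\ket{+\cdots+}$, as the paper does, removes that step --- while your disjoint-support argument for orthonormality of the $\ket{e_\mu}$ is in fact spelled out more carefully than in the paper.
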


\begin{example}
In the specific case of Fig. \ref{fig:Example}(a), on a square lattice, region $A$ is made of 10 sites, and its boundary is crossed by 8 plaquettes, so that $n_{AB} = 8$.
\end{example}

\begin{proof}
\begin{widetext}
The ground state~$\left|\psi_0\right\rangle$ in Eq. \ref{eq:psi0} can be written as
\begin{align}
\left|\psi_0\right\rangle
=&\; N
\left(\sum\textnormal{all loops of $\sigma^z$ operators in $A\cup B$}\right)
\left|+\cdots+\right\rangle	\nonumber \\
=&\; N
\left(\sum\textnormal{all loops of $\sigma^z$ operators crossing the boundary $A\mid B$ up to equivalance class}\right)	 \nonumber \\
&\cdot
\left(\sum\textnormal{all loops of $\sigma^z$ operators in $A$}\right)_A \left|+\cdots+\right\rangle_A
\otimes
\left(\sum\textnormal{all loops of $\sigma^z$ operators in $B$}\right)_B \left|+\cdots+\right\rangle_B
\end{align}
where the equivalence class is such that 2 loops are equivalent if one can be deformed into the other by loop operators in $A$~and/or~$B$ only.
\end{widetext}

\begin{figure}[t]
  \includegraphics[width=8.6cm]{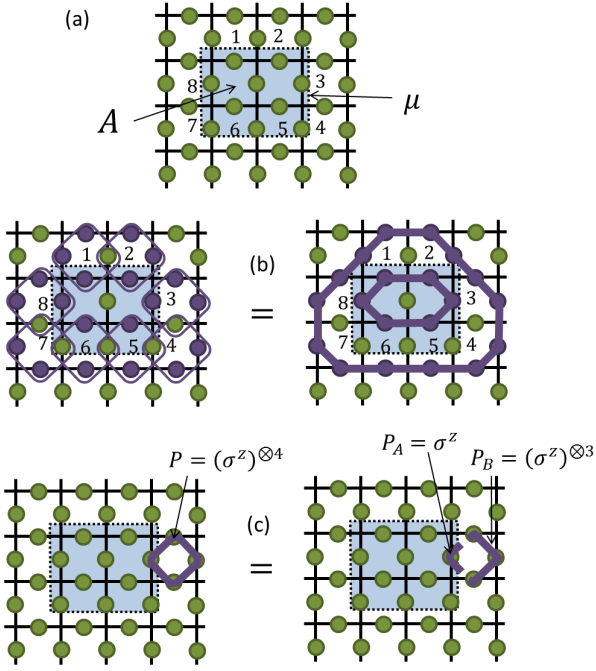}
\caption{
(a) Region $A$ made of 10 sites. There are 8 plaquette operators acting across its boundary.
(b) One of these plaquette operators can be obtained, plaquette operators contained inside of $A$ and by plaquette operators contained outside of $A$, as the product of the other 7 plaquette operators. Consequently, we say that there are $7$ independent plaquette operators across the boundary of region $A$.
(c) A plaquette operator $P$ acting across the boundary of $A$ can be broken into two operators $P_A$ and $P_B$, acting inside and outside of region $A$.
}
\label{fig:Example}
\end{figure}

Thus we can, without loss of generality, consider only loops made up of plaquette operators $P_{\mu}$ acting across the boundary.
\begin{align}
&\left|\psi_0\right\rangle	\nonumber \\
&=\; N
\left(\sum^{\mu_\textnormal{max}}_{\mu=1} \textnormal{prod of plaquette ops on boundary, }P_\mu\right)_{AB}	\nonumber \\
&\quad\quad \cdot
\left(\sum\textnormal{all loops in A}\right)_A \left|+\cdots+\right\rangle_A	\nonumber \\
&\quad\quad \otimes
\left(\sum\textnormal{all loops in B}\right)_B \left|+\cdots+\right\rangle_B
\end{align}

However, not all loops constructed from plaquettes on the boundary entangle regions~$A$ and~$B$.
In particular, the product of all plaquettes on the boundary~$A\mid B$ is a unitary operation acting locally in $A$ and~$B$ separately, see Fig. \ref{fig:Example}(b)
Therefore, the number of independent plaquettes on the boundary is $n_{AB}-1$ [This is valid for every boundary curve. In the present case there is only one such boundary curve]. Hence
\begin{equation}
\mu_\textnormal{max} = 2^{n_{AB}-1}
\end{equation}
where $n_{AB}$ is the total number of plaquettes on the boundary~$A\mid B$.

We can then `cut' the loops $P_\mu$ into 2 open strings in regions $A$~and~$B$ that meet on the boundary, see Fig. \ref{fig:Example}(c).
Formally, this is simply re-writing $P_\mu = {P_\mu}_A \otimes {P_\mu}_B$, so that
\begin{align}
\left|\psi_0\right\rangle
=&\; N
\sum^{\mu_\textnormal{max}}_{\mu=1}
\underbrace{\left[
	{P_\mu}_A \left(\sum\textnormal{all loops in A}\right)_A
	\left|+\cdots+\right\rangle_A
\right]}_{\propto\left|e_\mu\right\rangle_A}	\nonumber \\
&\otimes
\underbrace{\left[
	{P_\mu}_B \left(\sum\textnormal{all loops in B}\right)_B
	\left|+\cdots+\right\rangle_B
\right]}_{\propto\left|f_\mu\right\rangle_B}	\nonumber \\
=&\;
\sum^{\mu_\textnormal{max}}_{\mu=1} {\mu_\textnormal{max}}^{-\frac{1}{2}}
\left|e_\mu\right\rangle_A \otimes \left|f_\mu\right\rangle_B
\end{align}
where $\left|e_\mu\right\rangle$ and $\left|f_\mu\right\rangle$ are normalised so that $\left\langle e_{\mu'} \mid e_\mu\right\rangle = \left\langle f_{\mu'} \mid f_\mu\right\rangle = \delta_{\mu\mu'}$.
We thus have the desired result for the ground state~$\left|\psi_0\right\rangle$.
\end{proof}

\begin{proposition}\label{prop:one region contr:schmidt decomposition: decomposition of general GS}
A generic ground state~$\left|\psi\right\rangle$, Eq. \ref{eq:generic}, has the Schmidt decomposition
\begin{equation}
\left|\psi\right\rangle
= {\mu_\textnormal{max}}^{-\frac{1}{2}} \sum_{\mu=1}^{\mu_\textnormal{max}}
\left|e_\mu\right\rangle_A \otimes \left|f^{\left(\psi\right)}_\mu\right\rangle_B
\end{equation}
where $\mu_\textnormal{max} = 2^{n_{AB}-1}$ such that $n_{AB}$ is the number of plaquettes spanning the boundary~$A\mid B$.
Moreover, $\left\langle e_{\mu'} \mid e_\mu\right\rangle = \left\langle f^{\left(\psi\right)}_{\mu'} \mid f^{\left(\psi\right)}_\mu\right\rangle = \delta_{\mu\mu'}$.
\end{proposition}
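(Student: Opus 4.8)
The plan is to deduce this from Proposition~\ref{prop:one region contr:schmidt decomposition:decomposition of GS1} by showing that every generic ground state differs from $\left|\psi_0\right\rangle$ only by an operator supported entirely on $B$. The key observation is that, because $A$ is contractible, each string $\left(W^x_1\right)^{k_1}\cdots\left(W^x_\kappa\right)^{k_\kappa}$ appearing in Eq.~\ref{eq:generic} can be pushed entirely into $B$: two closed $\sigma^x$ strings in the same $\mathbb{Z}_2$ homology class differ by a product $\prod_{s\in S}A_s$ of star operators, and since a disk carries no homology, every class admits a representative loop lying in $B$. Using $A_s\left|\psi_0\right\rangle=\left|\psi_0\right\rangle$ (from the $\prod_s(\identity+A_s)/2$ form of $\left|\psi_0\right\rangle$), this gives $\left(W^x_1\right)^{k_1}\cdots\left(W^x_\kappa\right)^{k_\kappa}\left|\psi_0\right\rangle=\widetilde{W}_{\vec k}\left|\psi_0\right\rangle$ with $\widetilde{W}_{\vec k}$ a product of $\sigma^x$ operators acting on edges of $B$ only, so that $\left|\psi\right\rangle=\bigl(\identity_A\otimes O_B\bigr)\left|\psi_0\right\rangle$ with $O_B\equiv\sum_{\vec k}c_{\vec k}\,\widetilde{W}_{\vec k}$.

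Substituting the Schmidt decomposition of Proposition~\ref{prop:one region contr:schmidt decomposition:decomposition of GS1} and pushing $O_B$ through the tensor factorisation immediately gives $\left|\psi\right\rangle={\mu_\textnormal{max}}^{-1/2}\sum_\mu\left|e_\mu\right\rangle_A\otimes\left|f^{(\psi)}_\mu\right\rangle_B$ with $\left|f^{(\psi)}_\mu\right\rangle\equiv O_B\left|f_\mu\right\rangle_B$; the vectors $\left|e_\mu\right\rangle$ and the Schmidt weights are inherited verbatim, so it only remains to check that $\{\left|f^{(\psi)}_\mu\right\rangle\}$ is orthonormal. Equivalently, one must show that $\rho_A$ is the same for every ground state when $A$ is contractible; this is the heart of the matter, and I expect it to be the main obstacle.

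For this I would compute $\langle f^{(\psi)}_{\mu'}\mid f^{(\psi)}_\mu\rangle=\sum_{\vec k,\vec l}\bar c_{\vec l}c_{\vec k}\,\langle f_{\mu'}\mid\widetilde{W}_{\vec l}^{\dagger}\widetilde{W}_{\vec k}\mid f_\mu\rangle$. The diagonal terms $\vec l=\vec k$ contribute the identity operator and reproduce $\delta_{\mu\mu'}\sum_{\vec k}|c_{\vec k}|^2=\delta_{\mu\mu'}$. For $\vec l\neq\vec k$, the operator $\widetilde{W}_{\vec l}^{\dagger}\widetilde{W}_{\vec k}$ is a closed $\sigma^x$ loop supported in $B$ in a non-trivial homology class; since the complement of a disk still carries the full first homology of the surface, there is a non-contractible $\sigma^z$ loop $\bar{W}^z$ lying inside $B$ that anticommutes with it. From the explicit form of $\left|f_\mu\right\rangle$ in the proof of Proposition~\ref{prop:one region contr:schmidt decomposition:decomposition of GS1} — a $\sigma^z$ string in $B$ times the equal-weight superposition of all $\sigma^z$ loops in $B$ — one checks that $\bar{W}^z$ stabilises every $\left|f_\mu\right\rangle$. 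Inserting $\bar{W}^z$ on either side of the matrix element then flips its sign, so it vanishes for $\vec l\neq\vec k$, and we conclude $\langle f^{(\psi)}_{\mu'}\mid f^{(\psi)}_\mu\rangle=\delta_{\mu\mu'}$, which completes the proof.
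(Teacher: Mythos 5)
Your proof is correct and follows essentially the same route as the paper: apply the $W^x$ strings to the Schmidt decomposition of $\left|\psi_0\right\rangle$, push them entirely into $B$ (legitimate because $A$ is contractible), and absorb them into the $B$-side vectors. You in fact go further than the paper, which simply asserts $\langle f^{(\psi)}_{\mu'}\mid f^{(\psi)}_\mu\rangle=\delta_{\mu\mu'}$: your cancellation of the cross terms $\vec l\neq\vec k$, via a non-contractible $\sigma^z$ loop in $B$ that stabilises every $\left|f_\mu\right\rangle$ while anticommuting with the homologically non-trivial $\sigma^x$ loop $\widetilde{W}_{\vec l}^{\dagger}\widetilde{W}_{\vec k}$, is precisely the step the paper leaves implicit, and it is sound.
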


\begin{proof}
Region~$A$ is contractible, that is, any non-contractible loop can be deformed locally so as to be entirely contained in region~$B$. Non-contractible loops of~$\sigma^x$ operators are precisely those required to obtain the general ground states from~$\left|\psi_0\right\rangle$, and hence the Schmidt coefficients of the general ground state are exactly the same as that of~$\left|\psi_0\right\rangle$.
We show this explicitly by combining Eqs. \ref{eq:generic} and \ref{eq:one region contr:schmidt decomposition:decomposition of GS1}, so that
\begin{align}
&\left|\psi\right\rangle	\nonumber \\
&= \sum_{k_1,k_2,\cdots,k_\kappa=0}^1 c_{k_1 k_2 \cdots k_\kappa}
\left( W^x_1 \right)^{k_1} \left( W^x_2 \right)^{k_2} \cdots \left( W^x_\kappa \right)^{k_\kappa}	\nonumber \\
&\quad\quad \cdot
\left( {\mu_\textnormal{max}}^{-1/2} \sum^{\mu_\textnormal{max}}_{\mu=1}
\left|e_\mu\right\rangle_A \otimes \left|f_\mu\right\rangle_B \right)	\nonumber \\
&=
{\mu_\textnormal{max}}^{-1/2} \sum^{\mu_\textnormal{max}}_{\mu=1}
\left|e_\mu\right\rangle_A	\nonumber \\
&\quad\quad \otimes
\underbrace{
	\left( \sum_{k_1,\cdots,k_\kappa=0}^1 c_{k_1 \cdots k_\kappa}
\left( W^x_1 \right)^{k_1} \cdots \left( W^x_\kappa \right)^{k_\kappa} \right)_B \left|f_\mu\right\rangle_B
}_{\left|f^{\left(\psi\right)}_\mu\right\rangle_B}
\end{align}
where
\begin{align}
&\left\langle f^{\left(\psi\right)}_{\mu'} \mid f^{\left(\psi\right)}_\mu \right\rangle_B	\nonumber \\
&= \left\langle f_{\mu'}\right|
\left( \sum_{k'} c_{k'} \prod_{i_{k'}} W^x_{i_{k'}} \right)^\dagger_B
\left( \sum_k c_k \prod_{i_k} W^x_{i_k} \right)_B
\left|f_\mu\right\rangle_B	\nonumber \\
&= \delta_{\mu,\mu'}
\end{align}
is orthonormal as desired.
\end{proof}

\subsection{Decomposition for a more refined partition}

Consider a more refined partition of spins (see \ref{fig:Regions1}, b),
where $A = A_1\cup A_2$ and~$B = B_1\cup B_2$ such that region $B_2$ is contractible.

\begin{proposition}\label{prop:one region contr:4-partite decomposition:decomposition of GS1}
The ground state $\left|\psi_0\right\rangle$ can be decomposed as
\begin{align}\label{eq:one region contr:4-partite decomposition:decomposition of GS1}
\left|\psi_0\right\rangle
=& {\mu_\textnormal{max}}^{-\frac{1}{2}} \sum_\mu^{\mu_\textnormal{max}}
\left( {\nu_\textnormal{max}}^{-\frac{1}{2}} \sum_\nu^{\nu_\textnormal{max}}
\left|e_{\mu\nu}\right\rangle_{A_1} \left|g_\nu\right\rangle_{A_2} \right)	
\nonumber \\
&\quad \cdot
\left( {\tau_\textnormal{max}}^{-\frac{1}{2}} \sum_\tau^{\tau_\textnormal{max}}
\left|f_{\mu\tau}\right\rangle_{B_1} \left|h_\tau\right\rangle_{B_2} \right)
\end{align}
where
$\left\langle e_{p'q'} \mid e_{pq}\right\rangle = \left\langle f_{p'q'} \mid f_{pq}\right\rangle = \delta_{pp'}\delta_{qq'}$
and
$\left\langle g_{p'} \mid g_{p}\right\rangle = \left\langle h_{p'} \mid h_{p}\right\rangle = \delta_{pp'}$.
Also,
\begin{align}
\mu_\textnormal{max} &= 2^{n_{A_1 B_1}-1}	\\
\nu_\textnormal{max} &= 2^{n_{A_1 A_2}-1}	\\
\tau_\textnormal{max} &= 2^{n_{B_1 B_2}-1}
.
\end{align}
\end{proposition}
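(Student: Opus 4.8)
The plan is to iterate the loop-counting argument from the proof of Proposition~\ref{prop:one region contr:schmidt decomposition:decomposition of GS1}, now applied simultaneously to the three interfaces created by the refined partition: the original boundary $A_1\mid B_1$, the new internal boundary $A_1\mid A_2$ of $A$, and the new internal boundary $B_1\mid B_2$ of $B$. The crucial geometric input is the hypothesis recorded in Fig.~\ref{fig:Regions1}(b) that $A_2$ shares a border only with $A_1$ and $B_2$ shares a border only with $B_1$ (together with the standing assumption that no star or plaquette straddles more than two regions): this makes the three interfaces pairwise disjoint sets of edges, each a single closed curve by the contractibility/embedding hypotheses. Equivalently, one may phrase the argument as a nesting: apply Proposition~\ref{prop:one region contr:schmidt decomposition:decomposition of GS1} first to $A\mid B$ to produce the index $\mu$, then re-run the same reasoning for $A_1\mid A_2$ treating the ``$A$-half'' $\left|e_\mu\right\rangle_A$ as an equal-weight superposition of $\sigma^z$-loop configurations on $A$, and once more for $B_1\mid B_2$ on the ``$B$-half''.

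Concretely I would proceed as follows. First, rewrite $\left|\psi_0\right\rangle$ as the equal-weight superposition over all products of $\sigma^z$ loops in $A\cup B$ acting on $\left|+\cdots+\right\rangle$, exactly as in Proposition~\ref{prop:one region contr:schmidt decomposition:decomposition of GS1}. Second, quotient the set of such loops by the subgroup generated by loops supported entirely within a single region $A_1$, $A_2$, $B_1$, or $B_2$; because the three interfaces are disjoint, a class representative is fixed independently by its crossing pattern across each interface, parametrized by a product of boundary plaquettes $P_\mu$ across $A_1\mid B_1$, $Q_\nu$ across $A_1\mid A_2$, and $R_\tau$ across $B_1\mid B_2$. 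On each interface the observation that the product of all its plaquettes is a unitary internal to the two adjacent regions (Fig.~\ref{fig:Example}(b)) removes exactly one generator, giving $\mu_\textnormal{max}=2^{n_{A_1B_1}-1}$, $\nu_\textnormal{max}=2^{n_{A_1A_2}-1}$, $\tau_\textnormal{max}=2^{n_{B_1B_2}-1}$. Third, cut each boundary-plaquette product at its interface, $P_\mu={P_\mu}_{A_1}\otimes{P_\mu}_{B_1}$, $Q_\nu={Q_\nu}_{A_1}\otimes{Q_\nu}_{A_2}$, $R_\tau={R_\tau}_{B_1}\otimes{R_\tau}_{B_2}$, and absorb the remaining ``bulk'' loop sums in each region, together with the overall normalization, into the definitions $\left|e_{\mu\nu}\right\rangle_{A_1}$, $\left|g_\nu\right\rangle_{A_2}$, $\left|f_{\mu\tau}\right\rangle_{B_1}$, $\left|h_\tau\right\rangle_{B_2}$. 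Since $A_1$ is the only region touched by both the $\mu$- and $\nu$-strings it carries the double index $\mu\nu$; likewise $B_1$ carries $\mu\tau$, while $A_2$ and $B_2$ carry only $\nu$ and only $\tau$. Reorganizing the triple sum yields the nested form of Eq.~\ref{eq:one region contr:4-partite decomposition:decomposition of GS1}.

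The orthonormality statements follow by the same mechanism as in Proposition~\ref{prop:one region contr:schmidt decomposition:decomposition of GS1}. Two states $\left|e_{\mu\nu}\right\rangle_{A_1}$ and $\left|e_{\mu'\nu'}\right\rangle_{A_1}$ differ, up to internal $A_1$-loops, by the open $\sigma^z$-string ${P_\mu}_{A_1}{P_{\mu'}}_{A_1}{Q_\nu}_{A_1}{Q_{\nu'}}_{A_1}$, whose endpoints lie on the two disjoint interfaces; this is not an internal loop unless $(\mu,\nu)=(\mu',\nu')$, and since $\sigma^z$ exchanges $\left|+\right\rangle\leftrightarrow\left|-\right\rangle$ the overlap vanishes otherwise, the normalization having been built in for equal indices. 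The identical argument handles $\left|g_\nu\right\rangle_{A_2}$, $\left|f_{\mu\tau}\right\rangle_{B_1}$, and $\left|h_\tau\right\rangle_{B_2}$, and the factorized index structure makes the joint relations $\left\langle e_{p'q'}\mid e_{pq}\right\rangle=\left\langle f_{p'q'}\mid f_{pq}\right\rangle=\delta_{pp'}\delta_{qq'}$ and $\left\langle g_{p'}\mid g_{p}\right\rangle=\left\langle h_{p'}\mid h_{p}\right\rangle=\delta_{pp'}$ automatic.

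I expect the main obstacle to be the second step: verifying that modding out by internal loops of the four regions genuinely leaves three \emph{mutually independent} labels, with count $n-1$ on each interface and no further relations tying the $P_\mu$, $Q_\nu$, $R_\tau$ together. This is precisely where the hypotheses that $A_2$ borders only $A_1$, that $B_2$ borders only $B_1$, and that no operator touches more than two regions are essential: they guarantee the interfaces are disjoint closed curves, so that the single ``product of all boundary plaquettes'' relation on one interface cannot combine with those on the others, and so that cutting a string at one interface leaves its pieces at the other interfaces untouched. Once this independence is established, the remainder is bookkeeping parallel to the proof of Proposition~\ref{prop:one region contr:schmidt decomposition:decomposition of GS1}.
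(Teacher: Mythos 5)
Your proposal is correct and follows essentially the same route as the paper's own appendix proof: expand $\left|\psi_0\right\rangle$ as an equal-weight superposition of $\sigma^z$ loops, quotient by loops internal to each of the four regions so that classes are labelled by independent products of boundary plaquettes $P_\mu$, $Q_\nu$, $R_\tau$ on the three disjoint interfaces (each losing one generator), then cut these at the interfaces and absorb the bulk sums into the definitions of $\left|e_{\mu\nu}\right\rangle_{A_1}$, $\left|g_\nu\right\rangle_{A_2}$, $\left|f_{\mu\tau}\right\rangle_{B_1}$, $\left|h_\tau\right\rangle_{B_2}$. Your explicit orthonormality argument via non-closable open strings is a welcome detail that the paper leaves implicit.
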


\begin{remark}
Equation\nobreakspace \textup {(\ref {eq:one region contr:4-partite decomposition:decomposition of GS1})} is very much reminiscent of eq.\nobreakspace \textup {(\ref {eq:one region contr:schmidt decomposition:decomposition of GS1})}.
Indeed, we have that
\begin{equation}
\left|e_\mu\right\rangle_A
=
\left( {\nu_\textnormal{max}}^{-\frac{1}{2}} \sum_\nu^{\nu_\textnormal{max}}
\left|e_{\mu\nu}\right\rangle_{A_1} \left|g_\nu\right\rangle_{A_2} \right)	
\end{equation}
and
\begin{equation}
\left|f_\mu\right\rangle_B
=
\left( {\tau_\textnormal{max}}^{-\frac{1}{2}} \sum_\tau^{\tau_\textnormal{max}}
\left|f_{\mu\tau}\right\rangle_{B_1} \left|h_\tau\right\rangle_{B_2} \right)
\end{equation}
which we expect since eqs.\nobreakspace \textup {(\ref {eq:one region contr:schmidt decomposition:decomposition of GS1})} and\nobreakspace  \textup {(\ref {eq:one region contr:4-partite decomposition:decomposition of GS1})} differ only by a 'refining' of the partitions.
\end{remark}

\begin{proof}
See appendix.
\end{proof}

\begin{proposition}\label{prop:one region contr:4-partite decomposition:decomposition of general GS}
A generic ground state $\ket{\psi}$, Eq. \ref{eq:generic}, can be decomposed as
\begin{align}
\left|\psi\right\rangle
=& {\mu_\textnormal{max}}^{-\frac{1}{2}} \sum_\mu^{\mu_\textnormal{max}}
\left( {\nu_\textnormal{max}}^{-\frac{1}{2}} \sum_\nu^{\nu_\textnormal{max}}
\left|e_{\mu\nu}\right\rangle_{A_1} \left|g_\nu\right\rangle_{A_2} \right)	
\nonumber \\
&\quad \cdot
\left( {\tau_\textnormal{max}}^{-\frac{1}{2}} \sum_\tau^{\tau_\textnormal{max}}
\left|f^{\left(\psi\right)}_{\mu\tau}\right\rangle_{B_1} \left|h_\tau\right\rangle_{B_2} \right)
\end{align}
where
$\left\langle e_{p'q'} \mid e_{pq}\right\rangle = \left\langle f^{\left(\psi\right)}_{p'q'} \mid f^{\left(\psi\right)}_{pq}\right\rangle = \delta_{pp'}\delta_{qq'}$
and
$\left\langle g_{p'} \mid g_{p}\right\rangle = \left\langle h_{p'} \mid h_{p}\right\rangle = \delta_{pp'}$, and where
$\mu_\textnormal{max}$, $\nu_\textnormal{max}$, $\tau_\textnormal{max}$ are defined as in proposition\nobreakspace \ref {prop:one region contr:4-partite decomposition:decomposition of GS1}.
\end{proposition}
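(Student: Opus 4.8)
The plan is to combine Proposition~\ref{prop:one region contr:4-partite decomposition:decomposition of GS1} with the decomposition~\eqref{eq:generic} of a generic ground state in exactly the same way that Proposition~\ref{prop:one region contr:schmidt decomposition: decomposition of general GS} was obtained from Proposition~\ref{prop:one region contr:schmidt decomposition:decomposition of GS1}. The key observation is that region $A$ is contractible, so every non-contractible loop can be deformed into region $B$, and in fact (since $B_2$ is contractible) into the annular region $B_1$: each $W^x_i$ can be chosen so that it acts only on $B_1$. Hence the $\sigma^x$-loop operators appearing in~\eqref{eq:generic} act as the identity on $A_1$, $A_2$, and $B_2$, and act only on the $B_1$ factor.

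First I would take the decomposition of $\left|\psi_0\right\rangle$ furnished by Proposition~\ref{prop:one region contr:4-partite decomposition:decomposition of GS1} and apply the operator $\sum_{\vec k} c_{\vec k}\,(W^x_1)^{k_1}\cdots(W^x_\kappa)^{k_\kappa}$. Because this operator is supported entirely on $B_1$, it commutes through the $A_1$, $A_2$, and $B_2$ factors and acts only on the kets $\left|f_{\mu\tau}\right\rangle_{B_1}$, so we may define
\begin{equation}
\left|f^{\left(\psi\right)}_{\mu\tau}\right\rangle_{B_1}
\equiv
\left(\sum_{\vec k} c_{\vec k}\,(W^x_1)^{k_1}\cdots(W^x_\kappa)^{k_\kappa}\right)_{B_1}
\left|f_{\mu\tau}\right\rangle_{B_1},
\end{equation}
which immediately yields the stated form of $\left|\psi\right\rangle$ with the $\mu,\nu,\tau$ sums and the same ranges $\mu_\textnormal{max},\nu_\textnormal{max},\tau_\textnormal{max}$.

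Next I would verify the orthonormality relations. The relations $\left\langle e_{p'q'}\mid e_{pq}\right\rangle=\delta_{pp'}\delta_{qq'}$ and $\left\langle g_{p'}\mid g_p\right\rangle=\left\langle h_{p'}\mid h_p\right\rangle=\delta_{pp'}$ are unchanged from Proposition~\ref{prop:one region contr:4-partite decomposition:decomposition of GS1}, since those vectors are untouched. For $\left\langle f^{\left(\psi\right)}_{p'q'}\mid f^{\left(\psi\right)}_{pq}\right\rangle$ I would compute, as in the proof of Proposition~\ref{prop:one region contr:schmidt decomposition: decomposition of general GS},
\begin{equation}
\left\langle f^{\left(\psi\right)}_{p'q'}\mid f^{\left(\psi\right)}_{pq}\right\rangle
= \left\langle f_{p'q'}\right|
\left(\sum_{\vec k'} c_{\vec k'}\textstyle\prod_i (W^x_i)^{k'_i}\right)^{\!\dagger}_{B_1}
\left(\sum_{\vec k} c_{\vec k}\textstyle\prod_i (W^x_i)^{k_i}\right)_{B_1}
\left|f_{pq}\right\rangle = \delta_{pp'}\delta_{qq'},
\end{equation}
using that each $W^x_i$ is unitary, that distinct products $\prod_i (W^x_i)^{k_i}$ map $\left|f_{pq}\right\rangle$ into mutually orthogonal vectors (as they correspond to distinct homology sectors and hence to orthogonal Schmidt vectors $\left|f^{(\psi)}_\mu\right\rangle$ already established in Proposition~\ref{prop:one region contr:schmidt decomposition: decomposition of general GS}), and $\sum_{\vec k}|c_{\vec k}|^2=1$.

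The main obstacle — and the only step requiring genuine care rather than bookkeeping — is justifying that the $W^x_i$ can all be deformed into $B_1$ simultaneously, i.e. that a complete set of non-contractible $\sigma^x$ loops avoids $A_1\cup A_2\cup B_2$ altogether. This uses that $A$ is contractible (so no non-contractible cycle is homologous to one inside $A$) together with the hypothesis that $B_2$ is contractible, so the non-contractible cycles of the surface have representatives inside the remaining region $B_1$; the freedom to choose $W^x_i$ modulo contractible loops and modulo plaquette stabilizers is what makes this possible. Once this geometric fact is in hand, the rest follows verbatim from the argument already given for Proposition~\ref{prop:one region contr:schmidt decomposition: decomposition of general GS}, so I would state this reduction at the outset and then present the short computation above. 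The full details parallel those of the appendix proof of Proposition~\ref{prop:one region contr:4-partite decomposition:decomposition of GS1}, so I would relegate the verification to the appendix as well.
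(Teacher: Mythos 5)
Your proposal is correct and follows essentially the same route as the paper: the paper's own proof is a one-line reduction noting that all non-contractible loops can be deformed into $B_1$ and then combining Proposition~\ref{prop:one region contr:4-partite decomposition:decomposition of GS1} with Eq.~\ref{eq:generic}, exactly as you do. Your version merely spells out the definition of $\left|f^{\left(\psi\right)}_{\mu\tau}\right\rangle_{B_1}$ and the orthonormality check that the paper leaves implicit (mirroring its proof of Proposition~\ref{prop:one region contr:schmidt decomposition: decomposition of general GS}).
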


\begin{proof}
As with proposition\nobreakspace \ref {prop:one region contr:schmidt decomposition: decomposition of general GS}, to get the result for any other ground state, we note that any non-contractible loop can be entirely contained in region~$B_1$.
Hence we get the desired result by combining the previous equation with Eq. \ref{eq:generic}.
\end{proof}

\subsection{Entanglement Negativity}

\subsubsection{Entanglement between $A$ and $B$}

\begin{proposition}\label{prop:one region contr:negativity:negativity of bipartition}
The logarithmic negativity~$E_\mathcal{N}^{A\mid B}$ between regions $A$ and~$B$ for any ground state is given by
\begin{equation}
E_\mathcal{N}^{A\mid B}
\left( \left|\psi\right\rangle\left\langle\psi\right| \right)
= n_{AB} - 1
.
\end{equation}
\end{proposition}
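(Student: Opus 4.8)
The plan is to reduce the statement to the pure-state identity of Eq.~\ref{eq:PureStates}, which says that for a pure state the logarithmic negativity equals the Rényi entropy of index $q=1/2$, $E_\mathcal{N}^{A\mid B}(\proj{\psi}) = S_{1/2}(\rho_A) = 2\log_2 \tr((\rho_A)^{1/2})$, and then to substitute the Schmidt decomposition of the generic ground state obtained in Proposition~\ref{prop:one region contr:schmidt decomposition: decomposition of general GS}.

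First I would note that in that decomposition all $\mu_\textnormal{max}$ Schmidt coefficients are equal, $\sqrt{p_\mu} = \mu_\textnormal{max}^{-1/2}$, so that $\rho_A = \mu_\textnormal{max}^{-1}\sum_{\mu=1}^{\mu_\textnormal{max}}\ketbra{e_\mu}{e_\mu}$ is proportional to a rank-$\mu_\textnormal{max}$ projector. Hence $(\rho_A)^{1/2} = \mu_\textnormal{max}^{-1/2}\sum_\mu \ketbra{e_\mu}{e_\mu}$, giving $\tr((\rho_A)^{1/2}) = \mu_\textnormal{max}^{1/2}$. Inserting this into Eq.~\ref{eq:PureStates} and using $\mu_\textnormal{max} = 2^{n_{AB}-1}$ yields
\begin{equation}
E_\mathcal{N}^{A\mid B}(\proj{\psi}) = 2\log_2 \mu_\textnormal{max}^{1/2} = \log_2 \mu_\textnormal{max} = n_{AB}-1 .
\end{equation}
Because Proposition~\ref{prop:one region contr:schmidt decomposition: decomposition of general GS} shows the Schmidt spectrum does not depend on the coefficients $c_{\vec k}$ appearing in Eq.~\ref{eq:generic}, the same value is obtained for every ground state, which is the claim.

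The computation itself is a one-liner; there is no genuine obstacle left, since the essential work — establishing that the Schmidt spectrum across the boundary is flat, i.e. that the bipartition is in a maximally entangled state of $\mu_\textnormal{max}$ levels — was already carried out in the preceding propositions. The only point worth spelling out is the bookkeeping when region $A$ (or $B$) consists of several disjoint simply-connected pieces: one then repeats the argument for each boundary curve separately and appeals to additivity of the logarithmic negativity, Eq.~\ref{eq:additivity}, so that the final answer is the total number of boundary plaquettes minus the number of connected boundary curves.
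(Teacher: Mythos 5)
Your proof is correct, but it takes a different route from the paper's. You invoke the pure-state identity $E_\mathcal{N}(\proj{\psi}) = S_{1/2}(\rho_A)$ of Eq.~\ref{eq:PureStates} and then evaluate the R\'enyi entropy of the flat Schmidt spectrum, $\tr\bigl((\rho_A)^{1/2}\bigr) = \mu_\textnormal{max}^{1/2}$, giving $\log_2 \mu_\textnormal{max} = n_{AB}-1$. The paper instead works directly from the definition: it writes out $\rho_{AB}^{T_A}$ in the Schmidt basis, squares it to get a diagonal operator, takes the operator absolute value, and traces, arriving at $\log_2 \mu_\textnormal{max}$ the same way. The two arguments rest on exactly the same input (the flat Schmidt spectrum of Proposition~\ref{prop:one region contr:schmidt decomposition: decomposition of general GS}) and the paper itself notes in a remark that the answer must coincide with $S_{1/2}(\rho_A)$ because the state is pure. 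What your shortcut buys is brevity; what the paper's explicit partial-transpose computation buys is a template that carries over verbatim to the subsequent mixed-state cases ($\rho_{A_1 B_1}$, $\rho_{AB_1}$), where the reduction to a R\'enyi entropy is no longer available and one genuinely has to diagonalize $\bigl(\rho^{T_A}\bigr)^2$. Your closing remark about several disjoint simply-connected pieces and additivity, Eq.~\ref{eq:additivity}, matches the paper's Eq.~\ref{eq:transpose2}, though strictly speaking Eq.~\ref{eq:additivity} applies to a tensor product of states, so one should first argue that the ground state factorizes (up to the boundary loop structure) across the independent boundary curves before invoking it; the paper defers this to a brief statement in the Interpretation subsection rather than proving it.
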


\begin{remark}
Since in this case the state of $A\cup B$ is pure, this expression for the logarithmic negativity is equivalent to the Renyi entropy $S_{1/2}$, Eq. \ref{eq:PureStates}. Thus, the above proposition recovers the result of previous calculations of ground-state entanglement in the toric code, see e.g. Refs. \cite{Hamma,Flammia}.
\end{remark}

\begin{proof}
From proposition\nobreakspace \ref {prop:one region contr:schmidt decomposition: decomposition of general GS},
\begin{equation}
\left|\psi\right\rangle
= 2^{-\left(n_{AB}-1\right)/2} \sum_{\mu=1}^{\mu_\textnormal{max}}
\left|e_\mu\right\rangle_A \otimes \left|f^{\left(\psi\right)}_\mu\right\rangle_B
.
\end{equation}
Let $\rho_{AB} = \left|\psi\right\rangle\left\langle\psi\right|$. Then,
\begin{equation}
{\rho_{AB}}^{T_A}
= 2^{-\left(n_{AB}-1\right)}
\sum_{\mu,\mu'=1}^{\mu_\textnormal{max}}
\left|e_\mu\right\rangle\left\langle e_{\mu'}\right|_A
\otimes
\left|f^{\left(\psi\right)}_{\mu'}\right\rangle\left\langle f^{\left(\psi\right)}_\mu\right|_B,
\end{equation}
where we choose to take the partial transposition in the $\ket{e_{\mu}}_A$ basis. Taking the square, we have that
\begin{equation}
\left(\rho_{AB}^{T_A} \right)^2
= 2^{-2\left(n_{AB}-1\right)}
\sum_{\mu,\mu'=1}^{\mu_\textnormal{max}}
\left|e_\mu\right\rangle\left\langle e_\mu\right|_A
\otimes
\left|f^{\left(\psi\right)}_{\mu'}\right\rangle\left\langle f^{\left(\psi\right)}_{\mu'}\right|_B.
\end{equation}
Thus we can compute
\begin{align}
&\left| {\rho_{AB}}^{T_A} \right|
= \sqrt{ {{\rho_{AB}}^{T_A}}^\dagger {\rho_{AB}}^{T_A} }
= \sqrt{ \left( {\rho_{AB}}^{T_A} \right)^2 }	\nonumber \\
&\quad\quad
= \sqrt{ 2^{-2\left(n_{AB}-1\right)} }
\sum_{\mu,\mu'=1}^{\mu_\textnormal{max}}
\left|e_\mu\right\rangle\left\langle e_\mu\right|_A
\otimes
\left|f^{\left(\psi\right)}_{\mu'}\right\rangle\left\langle f^{\left(\psi\right)}_{\mu'}\right|_B
\end{align}
where the last equality follows since the state is already in its eigenvalue decomposition.
Therefore we have the negativity as
\begin{align}
&E_\mathcal{N}^{A\mid B}
\left( \left|\psi\right\rangle\left\langle\psi\right| \right)	\nonumber \\
&= \log_2\tr \left| {\rho_{AB}}^{T_A} \right|	\nonumber \\
&= \log_2\tr \left(
2^{-\left(n_{AB}-1\right)}
\sum_{\mu,\mu'=1}^{\mu_\textnormal{max}}
\left|e_\mu\right\rangle\left\langle e_\mu\right|_A
\otimes
\left|f^{\left(\psi\right)}_{\mu'}\right\rangle\left\langle f^{\left(\psi\right)}_{\mu'}\right|_B
\right)	\nonumber \\
&= \log_2 2^{\left(n_{AB}-1\right)}
= n_{AB}-1
\end{align}
as required.
\end{proof}

\begin{remark}
Since the state $\rho_{AB}$ is a pure state, the negativity for the bipartition is exactly the same as the R\'{e}nyi entropy of $\rho_A = tr_B\left(\rho_{AB}\right)$, as expected.
\end{remark}

\subsubsection{Entanglement between $A_1$ and $B_1$}

\begin{proposition}
The logarithmic negativity~$E_\mathcal{N}^{A_1\mid B_1}$ between regions $A_1$ and~$B_1$ for a generic ground state $\ket{\psi}$ is given by
\begin{equation}
E_\mathcal{N}^{A_1\mid B_1} \left( \rho_{A_1 B_1} \right)
= n_{AB} - 1
\end{equation}
where $\rho_{A_1 B_1} = \tr_{A_2 B_2} \left|\psi\right\rangle\left\langle\psi\right|$.
\end{proposition}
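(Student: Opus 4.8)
The plan is to reuse the structure of the bipartite argument in Proposition~\ref{prop:one region contr:negativity:negativity of bipartition}, now starting from the four-partite decomposition of a generic ground state just established. It is convenient to abbreviate that decomposition as $\left|\psi\right\rangle = \mu_\textnormal{max}^{-1/2}\sum_\mu\left|E_\mu\right\rangle_A\left|F_\mu\right\rangle_B$, with $\left|E_\mu\right\rangle_A = \nu_\textnormal{max}^{-1/2}\sum_\nu\left|e_{\mu\nu}\right\rangle_{A_1}\left|g_\nu\right\rangle_{A_2}$ and $\left|F_\mu\right\rangle_B = \tau_\textnormal{max}^{-1/2}\sum_\tau\left|f^{(\psi)}_{\mu\tau}\right\rangle_{B_1}\left|h_\tau\right\rangle_{B_2}$. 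First I would form $\rho_{A_1B_1} = \tr_{A_2B_2}\left|\psi\right\rangle\left\langle\psi\right|$: because $\{\left|g_\nu\right\rangle_{A_2}\}$ and $\{\left|h_\tau\right\rangle_{B_2}\}$ are orthonormal, tracing out $A_2$ and $B_2$ forces the sums over $\nu$ and over $\tau$ to become diagonal, giving
\begin{equation}
\rho_{A_1B_1} = \frac{1}{\mu_\textnormal{max}\nu_\textnormal{max}\tau_\textnormal{max}}\sum_{\mu,\mu'=1}^{\mu_\textnormal{max}}\Bigl(\sum_\nu\left|e_{\mu\nu}\right\rangle\left\langle e_{\mu'\nu}\right|_{A_1}\Bigr)\otimes\Bigl(\sum_\tau\left|f^{(\psi)}_{\mu\tau}\right\rangle\left\langle f^{(\psi)}_{\mu'\tau}\right|_{B_1}\Bigr),
\end{equation}
so that the coherences between distinct boundary configurations $\mu\neq\mu'$, which carry the entanglement, survive while the bulk labels $\nu,\tau$ are left maximally mixed.

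Next I would take the partial transpose with respect to $A_1$ in the orthonormal basis $\{\left|e_{\mu\nu}\right\rangle_{A_1}\}$ (which swaps $\mu\leftrightarrow\mu'$ at fixed $\nu$) and square it; since $\rho_{A_1B_1}^{T_{A_1}}$ is Hermitian, $\bigl|\rho_{A_1B_1}^{T_{A_1}}\bigr| = \sqrt{(\rho_{A_1B_1}^{T_{A_1}})^2}$. Using $\left\langle e_{p'q'}\mid e_{pq}\right\rangle = \delta_{pp'}\delta_{qq'}$ and the analogous relation for the $f^{(\psi)}$, the resulting quadruple sum collapses and one gets $(\rho_{A_1B_1}^{T_{A_1}})^2 = (\mu_\textnormal{max}\nu_\textnormal{max}\tau_\textnormal{max})^{-2}\,\Pi_{A_1}\otimes\Pi_{B_1}$, where $\Pi_{A_1} = \sum_{\mu,\nu}\left|e_{\mu\nu}\right\rangle\left\langle e_{\mu\nu}\right|$ and $\Pi_{B_1} = \sum_{\mu,\tau}\left|f^{(\psi)}_{\mu\tau}\right\rangle\left\langle f^{(\psi)}_{\mu\tau}\right|$ are orthogonal projectors of ranks $\mu_\textnormal{max}\nu_\textnormal{max}$ and $\mu_\textnormal{max}\tau_\textnormal{max}$. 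Taking the square root (a scalar times a projector) and the trace then yields $\tr\bigl|\rho_{A_1B_1}^{T_{A_1}}\bigr| = (\mu_\textnormal{max}\nu_\textnormal{max}\tau_\textnormal{max})^{-1}\mu_\textnormal{max}^2\nu_\textnormal{max}\tau_\textnormal{max} = \mu_\textnormal{max}$, hence $E_\mathcal{N}^{A_1\mid B_1} = \log_2\mu_\textnormal{max} = n_{A_1B_1}-1$. The final step, which is really the content of the statement, is to note that $n_{A_1B_1} = n_{AB}$: by construction $A_2$ shares no boundary with $B$ and $B_2$ shares no boundary with $A$, so the plaquettes straddling the cut $A\mid B$ are exactly those straddling $A_1\mid B_1$, and discarding the bulk regions does not change the boundary entanglement.

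I expect the only real difficulty to be the index bookkeeping in the squaring step: one must keep straight which sums contract (the bulk labels $\nu,\tau$, once when forming $\rho_{A_1B_1}$ and once more in $(\rho_{A_1B_1}^{T_{A_1}})^2$) and which do not (the boundary label $\mu$, whose off-diagonal terms are precisely what produces a nonzero negativity). If one prefers to sidestep the explicit square, an equivalent and shorter route is to observe that the local isometry $\left|e_{\mu\nu}\right\rangle_{A_1}\mapsto\left|\mu\right\rangle\left|\nu\right\rangle$, $\left|f^{(\psi)}_{\mu\tau}\right\rangle_{B_1}\mapsto\left|\mu\right\rangle\left|\tau\right\rangle$ carries $\rho_{A_1B_1}$ to the tensor product of a maximally entangled pure state of Schmidt rank $\mu_\textnormal{max}$ across $A_1\mid B_1$ with maximally mixed states on the bulk registers; invariance of $E_\mathcal{N}$ under local isometries, its additivity (Eq.~\ref{eq:additivity}), the vanishing of $E_\mathcal{N}$ on states that are product across the cut, and the value $\log_2\mu_\textnormal{max}$ of $E_\mathcal{N} = S_{1/2}$ (Eq.~\ref{eq:PureStates}) on the maximally entangled factor then give $E_\mathcal{N}^{A_1\mid B_1} = \log_2\mu_\textnormal{max} = n_{AB}-1$ at once.
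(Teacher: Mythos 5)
Your proof is correct and follows essentially the same route as the paper's: trace out $A_2$ and $B_2$ using orthonormality of the bulk labels $\nu,\tau$, partially transpose in the $\{|e_{\mu\nu}\rangle\}$ basis, square to obtain a scalar times a product of projectors, take the square root, and read off $\tr|\rho_{A_1B_1}^{T_{A_1}}| = \mu_{\max}$. Your explicit remark that $n_{A_1B_1}=n_{AB}$ (since $A_2$ and $B_2$ touch no part of the $A\mid B$ boundary) is a detail the paper leaves implicit, and the isometry-plus-additivity shortcut you sketch at the end is a valid alternative but not the argument the paper uses.
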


\begin{proof}
From proposition\nobreakspace \ref {prop:one region contr:4-partite decomposition:decomposition of general GS},
\begin{align}
\left|\psi\right\rangle
=& {\mu_\textnormal{max}}^{-\frac{1}{2}} \sum_\mu^{\mu_\textnormal{max}}
\left( {\nu_\textnormal{max}}^{-\frac{1}{2}} \sum_\nu^{\nu_\textnormal{max}}
\left|e_{\mu\nu}\right\rangle_{A_1} \left|g_\nu\right\rangle_{A_2} \right)	
\nonumber \\
&\quad \cdot
\left( {\tau_\textnormal{max}}^{-\frac{1}{2}} \sum_\tau^{\tau_\textnormal{max}}
\left|f^{\left(\psi\right)}_{\mu\tau}\right\rangle_{B_1} \left|h_\tau\right\rangle_{B_2} \right)
.
\end{align}
So, the reduced density matrix over~$A_1$ and~$B_1$ is
\begin{align}
\rho_{A_1 B_1}
&= \tr_{A_2 B_2} \left|\psi\right\rangle\left\langle\psi\right|	\nonumber \\
&=
{\mu_\textnormal{max}}^{-1} \sum_{\mu,\mu'=1}^{\mu_\textnormal{max}}
\cdot
{\nu_\textnormal{max}}^{-1} \sum_{\nu=1}^{\nu_\textnormal{max}}
\left|e_{\mu'\nu}\right\rangle\left\langle e_{\mu\nu}\right|_A	\nonumber \\
&\quad\quad \otimes
{\tau_\textnormal{max}}^{-1} \sum_{\tau=1}^{\tau_\textnormal{max}}
\left|f^{\left(\psi\right)}_{\mu'\tau}\right\rangle\left\langle f^{\left(\psi\right)}_{\mu\tau}\right|_B
.
\end{align}
Taking the partial transpose and squaring, we have that
\begin{align}
\left( \rho_{A_1 B_1}^{T_A} \right)^2
&=
{\mu_\textnormal{max}}^{-2}
\sum_{\mu=1}^{\mu_\textnormal{max}}
\sum_{\nu=1}^{\nu_\textnormal{max}} {\nu_\textnormal{max}}^{-2}
\left|e_{\mu\nu}\right\rangle\left\langle e_{\mu\nu}\right|_A	\nonumber \\
&\quad\quad \otimes
\sum_{\mu'=1}^{\mu_\textnormal{max}}
\sum_{\tau=1}^{\tau_\textnormal{max}} {\tau_\textnormal{max}}^{-2}
\left|f^{\left(\psi\right)}_{\mu'\tau}\right\rangle\left\langle f^{\left(\psi\right)}_{\mu'\tau}\right|_B
\end{align}
and therefore we have that
\begin{align}
\left| {\rho_{A_1 B_1}}^{T_A} \right|
&= \sqrt{ {{\rho_{A_1 B_1}}^{T_A}}^\dagger {\rho_{A_1 B_1}}^{T_A} }
= \sqrt{ \left( {\rho_{A_1 B_1}}^{T_A} \right)^2 }	\nonumber \\
&=
{\mu_\textnormal{max}}^{-1}
\sum_{\mu=1}^{\mu_\textnormal{max}}
\sum_{\nu=1}^{\nu_\textnormal{max}} {\nu_\textnormal{max}}^{-1}
\left|e_{\mu\nu}\right\rangle\left\langle e_{\mu\nu}\right|_A	\nonumber \\
&\quad\quad \otimes
\sum_{\mu'=1}^{\mu_\textnormal{max}}
\sum_{\tau=1}^{\tau_\textnormal{max}} {\tau_\textnormal{max}}^{-1}
\left|f^{\left(\psi\right)}_{\mu'\tau}\right\rangle\left\langle f^{\left(\psi\right)}_{\mu'\tau}\right|_B
\end{align}
where the last equality follows because the state is, as with the previous case, already in its eigenvalue decomposition.
Therefore the negativity reads
\begin{align}
&E_\mathcal{N}^{A_1\mid B_1} \left( \rho_{A_1 B_1} \right)	\nonumber \\
&= \log_2\tr \left| {\rho_{A_1 B_1}}^{T_A} \right|	\nonumber \\
&= \log_2\tr \left(
{\mu_\textnormal{max}}^{-1}
\sum_{\mu=1}^{\mu_\textnormal{max}}
\sum_{\nu=1}^{\nu_\textnormal{max}} {\nu_\textnormal{max}}^{-1}
\left|e_{\mu\nu}\right\rangle\left\langle e_{\mu\nu}\right|_A	\right.\nonumber \\
&\left.\quad\quad \otimes
\sum_{\mu'=1}^{\mu_\textnormal{max}}
\sum_{\tau=1}^{\tau_\textnormal{max}} {\tau_\textnormal{max}}^{-1}
\left|f^{\left(\psi\right)}_{\mu'\tau}\right\rangle\left\langle f^{\left(\psi\right)}_{\mu'\tau}\right|_B
\right)
	\nonumber \\
&= \log_2 {\mu_\textnormal{max}}
= n_{AB}-1
\end{align}
as required.
\end{proof}

\subsection{Interpretation}

We have just seen that the negativity of $\rho_{AB}$ (pure state) and of $\rho_{A_1B_1}$ (mixed state) are the same,
\begin{equation}\label{eq:interpretation}
    E_{\mathcal{N}}^{A|B}(\rho_{AB}) = E_{\mathcal{N}}^{A_1|B_1} (\rho_{A_1B_1}) = n_{AB}-1.
\end{equation}
This result indicates that the entanglement between parts $A$ and $B$, as measured by the negativity, survives the operation of tracing out the bulk of $A$ and $B$ (that is, tracing out regions $A_2$ and $B_2$). We conclude that this entanglement must be entangling the degrees of freedom of $A$ and $B$ that are very close to the boundary between these two regions. We therefore refer to this entanglement as \textit{boundary} entanglement. Notice also that boundary entanglement is proportional to the size of the boundary (in this case, as measured by the number of plaquette terms across the boundary), and contains a universal correction $-1$, the topological entanglement entropy \cite{Hamma,TEE1,TEE2}. [We can relate the $-1$ in Eq. \ref{eq:interpretation} to the topological entanglement entropy thanks to the fact that for pure states the negativity is the Renyi entropy $S_{1/2}$, Eq. \ref{eq:PureStates}].

For a contractible region $A$ made of $p$ simply connected subregions $\alpha=1, \cdots, p$, a generalization of the above calculations shows that the pure-state negativity is made of $p$ contributions,
\begin{equation}\label{eq:transpose2}
    E_{\mathcal{N}}^{A|B} = \sum_{\alpha} (n_{AB}^{(\alpha)} - 1),
\end{equation}
where $n_{AB}^{(\alpha)}$ is the number of plaquettes across the boundary of subregion $\alpha$. All of these contributions are robust against the tracing out of bulk degrees of freedom. A given contribution will disappear upon tracing out the corresponding subregion.

\section{Regions $A$ and $B$ are not contractible}

\label{sect:non-contractible}

In this section we specialize, for the sake of concreteness, to a lattice $A \cup B$ with the topology of a torus, and consider non-contractible regions $A$ and $B$ connected by two boundaries, see Fig. \ref{fig:Regions2}(a). First we compute the negativity for the pure state of $A\cup B$. Then, once again, we consider a refined partition into regions $A_1$, $A_2$, $B_1$, and $B_2$, where $A = A_1 \cup A_2$ and $B = B_1 \cup B_2$, see Fig. \ref{fig:Regions2}(b), and compute the negativity for the mixed states of $A\cup B_1$ and $A_1\cup B_1$.

\begin{figure}[t]
  \includegraphics[width=8.6cm]{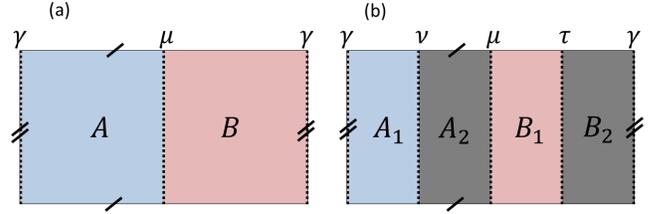}
\caption{
(a) Partition of a torus (lattice with periodic boundary conditions in the horizontal and vertical directions) into regions $A$ and $B$. Notice that regions $A$ and $B$ and not contractible share two boundaries. Indices $\mu$ and $\nu$ are used to label loop configurations at these boundaries.
(b) Refined partition of the torus. Each subregion $A_1$, $A_2$, $B_1$, and $B_2$ is non-contractible.
}
\label{fig:Regions2}
\end{figure}

\subsection{Schmidt decomposition of a bipartition}

Consider the bipartition $A|B$ of a torus in Fig. \ref{fig:Regions2}(a).
Let $\left|\psi_{\identity}\right\rangle$ be the ground state that is the $+1$~eigenstate of~$W^z_1$ and~$W^x_1$, where the `$1$'-direction is vertical in fig.\nobreakspace \ref {fig:Regions2}(c).
\begin{equation} \label{eq:psiI}
\left|\psi_{\identity}\right\rangle
= N
\prod_{\substack{\textnormal{indep.}\\B_p}} \left(\frac{\identity+B_p}{2}\right)
\left(\frac{\identity+W^z_1}{2}\right)
\left|+\cdots+\right\rangle
\end{equation}
where $N$ is the normalization such that $\left\langle\psi_{\identity}\mid\psi_{\identity}\right\rangle = 1$.

We also label the other  eigenstates of~$W^z_1$ and~$W^x_1$ as
\begin{eqnarray}
\left|\psi_e\right\rangle &=& W^x_2 \left|\psi_{\identity}\right\rangle \label{eq:psie}\\
\left|\psi_m\right\rangle &=& W^z_2 \left|\psi_{\identity}\right\rangle	\\
\left|\psi_{em}\right\rangle &=& W^x_2 W^z_2 \left|\psi_{\identity}\right\rangle \label{eq:psiemem}
\end{eqnarray}
The states $\left\lbrace \left|\psi_{\identity}\right\rangle, \left|\psi_e\right\rangle, \left|\psi_m\right\rangle, \left|\psi_{em}\right\rangle \right\rbrace$ have definite anyon flux (of type $i = I,e,m,em$) threading through the interior of the torus in the horizontal direction. Since this set forms a complete basis of the ground state space, a generic ground state~$\left|\psi\right\rangle$ can be written as
\begin{equation}\label{eq:more than 1 region non-contr:general GS}
\left|\psi\right\rangle
= \sum_{i=\identity,e,m,em} c_i \left|\psi_i\right\rangle
\end{equation}

\begin{proposition}\label{prop:more than 1 region non-contr:bipartition:decomposition of GSident}
The ground state~$\left|\psi_{\identity}\right\rangle$ has the Schmidt decomposition
\begin{equation}
\left|\psi_{\identity}\right\rangle
= {\mu_\textnormal{max}}^{-\frac{1}{2}} {\gamma_\textnormal{max}}^{-\frac{1}{2}}
\sum_{\mu=1}^{\mu_\textnormal{max}} \sum_{\gamma=1}^{\gamma_\textnormal{max}}
\left|e_{\mu\gamma}\right\rangle_A \otimes \left|f_{\mu\gamma}\right\rangle_B
\end{equation}
where $\mu_\textnormal{max} = 2^{n_{AB}^{\left(1\right)}-1}$ such that $n_{AB}^{\left(1\right)}$ is the number of plaquettes spanning one of the common boundaries of $A$ and $B$, and
where $\gamma_\textnormal{max} = 2^{n_{AB}^{\left(2\right)}-1}$ such that $n_{AB}^{\left(2\right)}$ is the number of plaquettes spanning the other boundary of $A$ and $B$.
Also, $\left\langle e_{\mu'\gamma'} \mid e_{\mu\gamma}\right\rangle
= \left\langle f_{\mu'\gamma'} \mid f_{\mu\gamma}\right\rangle
= \delta_{\mu\mu'}\delta_{\gamma\gamma'}$.
\end{proposition}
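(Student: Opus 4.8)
The plan is to follow the strategy of the proof of Proposition~\ref{prop:one region contr:schmidt decomposition:decomposition of GS1}, the one new feature being that region $A$ now shares \emph{two} disjoint boundary curves with $B$ (see Fig.~\ref{fig:Regions2}). Using Eq.~\ref{eq:psiI}, I would first rewrite $\left|\psi_{\identity}\right\rangle$ as the equal-weight superposition $N\sum_{g}g\left|+\cdots+\right\rangle$, where $g$ runs over the group $\mathcal{G}$ of $\sigma^z$ loop operators generated by all plaquettes $B_p$ together with the single non-contractible loop $W^z_1$. The crucial difference with $\left|\psi_0\right\rangle$ is that $W^z_2$ is \emph{absent} from $\mathcal{G}$; this is ultimately what keeps the two boundaries from becoming correlated.

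Next I would pass to the quotient of $\mathcal{G}$ by the subgroup of loop operators supported entirely inside $A$ or entirely inside $B$, since only this quotient affects the Schmidt structure across $A\mid B$. Plaquettes in the bulk of $A$ or of $B$ lie in that subgroup, and so does $W^z_1$: the two boundary curves wind around the same cycle of the torus as $W^z_1$, so $W^z_1$ can be deformed to lie entirely within $A$ (or within $B$) without crossing either boundary. Hence the quotient is generated only by the two families of plaquettes straddling boundary~$1$ and boundary~$2$; these families commute and, being supported near disjoint curves, act independently.

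I would then count generators exactly as in the contractible case. For each $j$, just as in the proof of Proposition~\ref{prop:one region contr:schmidt decomposition:decomposition of GS1} (Fig.~\ref{fig:Example}(b)), the product of all $n_{AB}^{(j)}$ plaquettes straddling boundary~$j$ is a tensor product of an operator acting in $A$ and an operator acting in $B$ --- here a pair of parallel $\sigma^z$ loops, each homologous to $W^z_1$ --- hence it is local and contributes one relation. No proper sub-product of straddling plaquettes is local, and the only global relation $\prod_p B_p=\identity$ merely re-expresses the product of the two per-boundary relations, so nothing couples the two families. Thus the quotient is $\mathbb{Z}_2^{\,n_{AB}^{(1)}-1}\times\mathbb{Z}_2^{\,n_{AB}^{(2)}-1}$, with $\mu_\textnormal{max}\gamma_\textnormal{max}=2^{\,n_{AB}^{(1)}-1}\cdot 2^{\,n_{AB}^{(2)}-1}$ elements. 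Picking a representative $P_{\mu\gamma}$ of each class and writing $P_{\mu\gamma}=\left(P_{\mu\gamma}\right)_A\otimes\left(P_{\mu\gamma}\right)_B$, I would set $\left|e_{\mu\gamma}\right\rangle_A\propto\left(P_{\mu\gamma}\right)_A\left(\sum\textnormal{loops in }A\right)\left|+\cdots+\right\rangle_A$ and similarly $\left|f_{\mu\gamma}\right\rangle_B$, all with the common coefficient $\left(\mu_\textnormal{max}\gamma_\textnormal{max}\right)^{-1/2}$.

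Finally I would check orthonormality. For distinct $(\mu,\gamma)$ the vectors $\left|e_{\mu\gamma}\right\rangle_A$ differ by an open $\sigma^z$ string in $A$ with endpoints on the boundary; acting on the locally symmetric sum $\left(\sum\textnormal{loops in }A\right)\left|+\cdots+\right\rangle_A$, distinct such strings flip distinct straddling star eigenvalues, so that $\left\langle e_{\mu'\gamma'}\mid e_{\mu\gamma}\right\rangle=\delta_{\mu\mu'}\delta_{\gamma\gamma'}$, and likewise for $\left|f_{\mu\gamma}\right\rangle_B$. The step I expect to be the real obstacle is the independence claim in the third paragraph: one must argue carefully that no relation couples the two boundary families, equivalently that the non-contractible loop $W^z_1$ present in $\mathcal{G}$ does not secretly entangle them. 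The contrast is $W^z_2$, which crosses both boundaries and, were it in $\mathcal{G}$, would create exactly such a cross-boundary correlation --- precisely the long-range entanglement that will distinguish $\left|\psi_{\identity}\right\rangle$ from the generic ground state analyzed later in this section.
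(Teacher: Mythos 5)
Your proposal is correct and follows essentially the same route as the paper's appendix proof: write $\left|\psi_{\identity}\right\rangle$ as the equal-weight superposition over the group generated by plaquettes and $W^z_1$, reduce the cross-boundary loops modulo operators local to $A$ or $B$ to products of straddling plaquettes (with one relation per boundary curve giving $2^{n_{AB}^{(j)}-1}$ classes each), and cut the representatives into $A$- and $B$-parts to define the orthonormal $\left|e_{\mu\gamma}\right\rangle_A$ and $\left|f_{\mu\gamma}\right\rangle_B$. Your added care about why $W^z_1$ is absorbed into the local subgroup while the absent $W^z_2$ would have coupled the two boundaries makes explicit a point the paper only states implicitly, but it is the same argument.
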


\begin{proof}
See appendix.
\end{proof}

\begin{proposition}\label{prop:more than 1 region non-contr:bipartition:decomposition of GS(i)}
Let $\left|\psi_i\right\rangle$ with $i = \identity,e,m,em$ denote a ground state which is an eigenstate of the operators~$W^z_1$ and~$W^x_1$ (i.e. has definite flux).
The ground state~$\left|\psi_i\right\rangle$ has the Schmidt decomposition
\begin{equation}
\left|\psi_i\right\rangle
= {\mu_\textnormal{max}}^{-\frac{1}{2}} {\gamma_\textnormal{max}}^{-\frac{1}{2}}
\sum_{\mu=1}^{\mu_\textnormal{max}} \sum_{\gamma=1}^{\gamma_\textnormal{max}}
\left|e^{\left(\psi_i\right)}_{\mu\gamma}\right\rangle_A
\otimes \left|f^{\left(\psi_i\right)}_{\mu\gamma}\right\rangle_B
\end{equation}
where $\mu_\textnormal{max}$ and $\gamma_\textnormal{max}$ are defined in proposition\nobreakspace \ref {prop:more than 1 region non-contr:bipartition:decomposition of GSident}.
Also, $\left\langle e^{\left(\psi_{i'}\right)}_{\mu'\gamma'} \mid e^{\left(\psi_i\right)}_{\mu\gamma}\right\rangle
= \left\langle f^{\left(\psi_{i'}\right)}_{\mu'\gamma'} \mid f^{\left(\psi_i\right)}_{\mu\gamma}\right\rangle
= \delta_{\mu\mu'}\delta_{\gamma\gamma'}\delta_{ii'}$.
\end{proposition}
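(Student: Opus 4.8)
The plan is to read off the Schmidt decomposition of each flux eigenstate $\left|\psi_i\right\rangle$ directly from that of $\left|\psi_{\identity}\right\rangle$, which proposition~\ref{prop:more than 1 region non-contr:bipartition:decomposition of GSident} already provides, by acting with the loop operators that connect the two states. Write $\left|\psi_i\right\rangle = V_i\left|\psi_{\identity}\right\rangle$ with $V_{\identity}=\identity$, $V_e = W^x_2$, $V_m = W^z_2$, $V_{em}=W^x_2 W^z_2$, cf.\ Eqs.~\ref{eq:psie}--\ref{eq:psiemem}. Each $V_i$ is a product of single-spin Pauli operators, hence it factorizes as $V_i = \left(V_i\right)_A\otimes\left(V_i\right)_B$ into unitaries supported on the spins of $A$ and of $B$ respectively. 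Applying $V_i$ term by term to the decomposition of the previous proposition and defining $\left|e^{\left(\psi_i\right)}_{\mu\gamma}\right\rangle_A\equiv\left(V_i\right)_A\left|e_{\mu\gamma}\right\rangle_A$ and $\left|f^{\left(\psi_i\right)}_{\mu\gamma}\right\rangle_B\equiv\left(V_i\right)_B\left|f_{\mu\gamma}\right\rangle_B$ immediately gives the stated form with flat coefficients ${\mu_\textnormal{max}}^{-1/2}{\gamma_\textnormal{max}}^{-1/2}$; unitarity of $\left(V_i\right)_A$ and $\left(V_i\right)_B$ carries the orthonormality of $\{\left|e_{\mu\gamma}\right\rangle_A\}$ and $\{\left|f_{\mu\gamma}\right\rangle_B\}$ over to the new vectors at fixed~$i$.

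The only remaining content is the cross-flux factor $\delta_{ii'}$. The observation I would use is that each $\left|\psi_i\right\rangle$ is a simultaneous eigenstate of $W^z_1$ and $W^x_1$ with eigenvalues $\left(w^z_i,w^x_i\right)\in\{\pm1\}^2$, and that the assignment $i\mapsto\left(w^z_i,w^x_i\right)$ is injective. Because both $A$ and $B$ wrap the torus in the `$1$'-direction, each of $W^z_1$ and $W^x_1$ has a representative supported entirely within $A$, and, separately, one supported entirely within $B$; any two such representatives differ only by a product of plaquette (resp.\ star) operators, so they all act identically on the ground subspace. Writing $\left(\tilde W^z_1\right)_A$, $\left(\tilde W^x_1\right)_A$ for the $A$-representatives, I would insert $W^z_1\left|\psi_i\right\rangle = w^z_i\left|\psi_i\right\rangle$ and $W^x_1\left|\psi_i\right\rangle = w^x_i\left|\psi_i\right\rangle$ into the Schmidt form just obtained and use the orthonormality of the $\left|f^{\left(\psi_i\right)}_{\mu\gamma}\right\rangle_B$ to match coefficients, concluding that every $\left|e^{\left(\psi_i\right)}_{\mu\gamma}\right\rangle_A$ is an eigenvector of the Hermitian operators $\left(\tilde W^z_1\right)_A$ and $\left(\tilde W^x_1\right)_A$ with eigenvalues $w^z_i$ and $w^x_i$. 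For $i\neq i'$ at least one of these eigenvalues differs, so the corresponding $A$-vectors are eigenvectors of a Hermitian operator with distinct eigenvalues and hence orthogonal; running the identical argument with the $B$-representatives gives the same for the $B$-vectors. Combined with the fixed-$i$ orthonormality, this yields $\left\langle e^{\left(\psi_{i'}\right)}_{\mu'\gamma'}\mid e^{\left(\psi_i\right)}_{\mu\gamma}\right\rangle = \left\langle f^{\left(\psi_{i'}\right)}_{\mu'\gamma'}\mid f^{\left(\psi_i\right)}_{\mu\gamma}\right\rangle = \delta_{\mu\mu'}\delta_{\gamma\gamma'}\delta_{ii'}$.

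I expect the delicate point to be the geometric claim underlying the last step: that $W^z_1$ and $W^x_1$ can each be deformed into a representative living entirely inside $A$ and, independently, entirely inside $B$, and that such deformations modify the operator only by stabilizer elements and so leave its action on ground states unchanged. This is precisely what makes the flux labels locally detectable in $A$ and in $B$ and thereby forces the term-by-term orthogonality across fluxes. Everything else --- the factorization $V_i=\left(V_i\right)_A\otimes\left(V_i\right)_B$, invariance of orthonormality under local unitaries, and orthogonality of eigenvectors of a Hermitian operator with different eigenvalues --- is routine. I note that the cheaper identity $\left\langle\psi_{i'}\mid\psi_i\right\rangle=\delta_{ii'}$, expanded in the Schmidt bases, only delivers a single summed constraint and not the required term-by-term $\delta_{ii'}$, which is why the local-operator argument is needed.
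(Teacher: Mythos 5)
Your proposal is correct and follows essentially the same route as the paper: write $\ket{\psi_i}=V_i\ket{\psi_{\identity}}$, factorize $V_i=(V_i)_A\otimes(V_i)_B$ into local unitaries, and push it through the Schmidt decomposition of $\ket{\psi_{\identity}}$, with unitarity giving orthonormality at fixed flux. Your second paragraph in fact supplies more than the paper does --- the paper simply asserts the cross-flux orthogonality $\delta_{ii'}$ --- and your justification via $A$- and $B$-supported representatives of $W^z_1$ and $W^x_1$, whose flux-dependent eigenvalues force term-by-term orthogonality of the Schmidt vectors, is a sound way to close that gap.
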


\begin{proof}
The case for $i=\identity$ is covered in proposition\nobreakspace \ref {prop:more than 1 region non-contr:bipartition:decomposition of GSident}.
For $i=e$, from Eq. \ref{eq:psie} and the $i=\identity$ case we have
\begin{align}
&\left|\psi_e\right\rangle	\nonumber \\
&=
{\mu_\textnormal{max}}^{-\frac{1}{2}} {\gamma_\textnormal{max}}^{-\frac{1}{2}}
W^x_2
\sum_{\mu=1}^{\mu_\textnormal{max}} \sum_{\gamma=1}^{\gamma_\textnormal{max}}
\left|e_{\mu\gamma}\right\rangle_A \otimes \left|f_{\mu\gamma}\right\rangle_B	\nonumber \\
&=
{\mu_\textnormal{max}}^{-\frac{1}{2}} {\gamma_\textnormal{max}}^{-\frac{1}{2}}
\sum_{\mu=1}^{\mu_\textnormal{max}} \sum_{\gamma=1}^{\gamma_\textnormal{max}}
\underbrace{
	{W^x_2}_A \left|e_{\mu\gamma}\right\rangle_A
}_{\left|e^{\left(\psi_e\right)}_{\mu\gamma}\right\rangle_A}
\otimes
\underbrace{
	{W^x_2}_B \left|f_{\mu\gamma}\right\rangle_B
}_{\left|f^{\left(\psi_e\right)}_{\mu\gamma}\right\rangle_B}
\end{align}
where we have rewritten $W^x_2 = {W^x_2}_A \otimes {W^x_2}_B$.
Here, $\left\langle e^{\left(\psi_{\identity}\right)}_{\mu'\gamma'} \mid e^{\left(\psi_e\right)}_{\mu\gamma}\right\rangle
= \left\langle f^{\left(\psi_{\identity}\right)}_{\mu'\gamma'} \mid f^{\left(\psi_e\right)}_{\mu\gamma}\right\rangle
= 0$
and
$\left\langle e^{\left(\psi_e\right)}_{\mu'\gamma'} \mid e^{\left(\psi_e\right)}_{\mu\gamma}\right\rangle
= \left\langle f^{\left(\psi_e\right)}_{\mu'\gamma'} \mid f^{\left(\psi_e\right)}_{\mu\gamma}\right\rangle
= \delta_{\mu\mu'}\delta_{\gamma\gamma'}$.
The cases $i=m,em$ follow in an analogous manner.
\end{proof}

\begin{proposition}\label{prop:more than 1 region non-contr:bipartition:decomposition of general GS}
The general ground state~$\left|\psi\right\rangle$ has the Schmidt decomposition
\begin{align}
\left|\psi\right\rangle
= \sum_{i=\identity,e,m,em} c_i
\sum_{\mu=1}^{\mu_\textnormal{max}} \sum_{\gamma=1}^{\gamma_\textnormal{max}}
&{\mu_\textnormal{max}}^{-\frac{1}{2}} {\gamma_\textnormal{max}}^{-\frac{1}{2}}	\nonumber \\
&\quad\quad\quad \cdot
\left|e^{\left(\psi_i\right)}_{\mu\gamma}\right\rangle_A
\otimes \left|f^{\left(\psi_i\right)}_{\mu\gamma}\right\rangle_B
\end{align}
where $\mu_\textnormal{max}$ and $\gamma_\textnormal{max}$ are defined in proposition\nobreakspace \ref {prop:more than 1 region non-contr:bipartition:decomposition of GSident}, and
$c_i$ is defined in eq.\nobreakspace \textup {(\ref {eq:more than 1 region non-contr:general GS})}.
\end{proposition}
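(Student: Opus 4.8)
The plan is to obtain this decomposition as an immediate consequence of the flux-sector expansion \eqref{eq:more than 1 region non-contr:general GS} together with Proposition \ref{prop:more than 1 region non-contr:bipartition:decomposition of GS(i)}, which already supplies the Schmidt decomposition of each fixed-flux ground state $\left|\psi_i\right\rangle$ and, more importantly, the cross-sector orthogonality relations $\left\langle e^{(\psi_{i'})}_{\mu'\gamma'}\mid e^{(\psi_i)}_{\mu\gamma}\right\rangle = \left\langle f^{(\psi_{i'})}_{\mu'\gamma'}\mid f^{(\psi_i)}_{\mu\gamma}\right\rangle = \delta_{\mu\mu'}\delta_{\gamma\gamma'}\delta_{ii'}$. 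No new analytic estimate is needed; the argument is a substitution followed by a bookkeeping check.

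Concretely, I would first write $\left|\psi\right\rangle = \sum_{i=\identity,e,m,em} c_i \left|\psi_i\right\rangle$ from Eq. \eqref{eq:more than 1 region non-contr:general GS} and insert into each summand the Schmidt decomposition of $\left|\psi_i\right\rangle$ from Proposition \ref{prop:more than 1 region non-contr:bipartition:decomposition of GS(i)}. Since the scalar prefactor ${\mu_\textnormal{max}}^{-1/2}{\gamma_\textnormal{max}}^{-1/2}$ is the same in every flux sector, it can be pulled out and the finite sums over $i$, $\mu$, $\gamma$ freely interchanged, which reproduces the claimed expression verbatim. I would then note that, regarding the composite label $(i,\mu,\gamma)$ as a single Schmidt index, the family $\{\left|e^{(\psi_i)}_{\mu\gamma}\right\rangle_A\}$ is orthonormal in $A$ and $\{\left|f^{(\psi_i)}_{\mu\gamma}\right\rangle_B\}$ is orthonormal in $B$, precisely by the orthogonality relations recalled above; the corresponding Schmidt coefficients are $|c_i|/\sqrt{\mu_\textnormal{max}\gamma_\textnormal{max}}$, each with multiplicity $\mu_\textnormal{max}\gamma_\textnormal{max}$ (phases of the $c_i$ may be absorbed into the states if one insists on nonnegative coefficients). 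As a consistency check I would verify $\left\langle\psi\mid\psi\right\rangle = \sum_i |c_i|^2 \,(\mu_\textnormal{max}\gamma_\textnormal{max})\,(\mu_\textnormal{max}\gamma_\textnormal{max})^{-1} = \sum_i|c_i|^2 = 1$.

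The only point requiring any care — and the closest thing to an obstacle — is that one must genuinely invoke the mutual orthogonality of the $A$-side (and $B$-side) states coming from distinct anyon-flux sectors, i.e.\ the $\delta_{ii'}$ factor; without it the composite-index set would not be orthonormal and the displayed sum would not be a Schmidt decomposition. That orthogonality is exactly what was established in Proposition \ref{prop:more than 1 region non-contr:bipartition:decomposition of GS(i)} (it is where the $W^x_2$, $W^z_2$ loop operators cannot be deformed into either region alone), so the present proof reduces to citing it and reorganizing the sums.
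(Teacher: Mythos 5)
Your proposal is correct and follows essentially the same route as the paper: the paper's own proof likewise just substitutes the fixed-flux Schmidt decompositions into $\left|\psi\right\rangle=\sum_i c_i\left|\psi_i\right\rangle$ and invokes the cross-sector orthogonality $\delta_{ii'}$ from Proposition~\ref{prop:more than 1 region non-contr:bipartition:decomposition of GS(i)}. You merely spell out the bookkeeping (composite Schmidt index, coefficients, normalization) that the paper leaves implicit.
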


\begin{proof}
It follows directly from proposition\nobreakspace \ref {prop:more than 1 region non-contr:bipartition:decomposition of GS(i)} and the fact that
$\left\langle e^{\left(\psi_{i'}\right)}_{\mu'\gamma'} \mid e^{\left(\psi_i\right)}_{\mu\gamma}\right\rangle
= \left\langle f^{\left(\psi_{i'}\right)}_{\mu'\gamma'} \mid f^{\left(\psi_i\right)}_{\mu\gamma}\right\rangle
= \delta_{\mu\mu'}\delta_{\gamma\gamma'}\delta_{ii'}$.
\end{proof}

\subsection{Decomposition for a more refined partition}

Consider the more refined partition of Fig. \ref{fig:Regions2}(b), where $A = A_1\cup A_2$ and~$B = B_1\cup B_2$.

\begin{proposition}\label{prop:more than 1 region non-contr:refined partition:decomposition of GSident}
The ground state~$\left|\psi_{\identity}\right\rangle$ has the decomposition
\begin{align}
\left|\psi_{\identity}\right\rangle
=& {\mu_\textnormal{max}}^{-\frac{1}{2}} {\gamma_\textnormal{max}}^{-\frac{1}{2}}
{\nu_\textnormal{max}}^{-\frac{1}{2}} {\tau_\textnormal{max}}^{-\frac{1}{2}}	\nonumber \\
&
\sum_{\mu=1}^{\mu_\textnormal{max}} \sum_{\gamma=1}^{\gamma_\textnormal{max}}
\sum_{\nu=1}^{\nu_\textnormal{max}} \sum_{\tau=1}^{\tau_\textnormal{max}}
\left|e_{\nu\gamma}\right\rangle_{A_1} \left|f_{\mu\tau}\right\rangle_{B_1}
\left|g_{\mu\nu}\right\rangle_{A_2} \left|h_{\tau\gamma}\right\rangle_{B_2}
\end{align}
where the indices $\mu$, $\gamma$, $\nu$ and $\tau$ range from~$1$ to
\begin{subequations}
\begin{align}
\mu_\textnormal{max} &= 2^{n_{A_2 B_1}-1}	\\
\gamma_\textnormal{max} &= 2^{n_{A_1 B_2}-1}	\\
\nu_\textnormal{max} &= 2^{n_{A_1 A_2}-1}	\\
\tau_\textnormal{max} &= 2^{n_{B_1 B_2}-1}
.
\end{align}
\end{subequations}
As usual, $n$ denotes the number of plaquettes spanning one of the relevant boundary, and the states satisfy
$\left\langle e_{p'q'} \mid e_{pq}\right\rangle
= \left\langle f_{p'q'} \mid f_{pq}\right\rangle
= \left\langle g_{p'q'} \mid g_{pq}\right\rangle
= \left\langle h_{p'q'} \mid h_{pq}\right\rangle
= \delta_{pp'}\delta_{qq'}$.
\end{proposition}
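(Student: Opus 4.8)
The plan is to reduce this four–region decomposition to the bipartite Schmidt decomposition already established in Proposition~\ref{prop:more than 1 region non-contr:bipartition:decomposition of GSident}, and then to refine that decomposition twice --- once inside $A$ along the cut $A_1\mid A_2$, once inside $B$ along the cut $B_1\mid B_2$ --- reusing the ``cut the boundary loops into open strings'' construction that proved Proposition~\ref{prop:one region contr:schmidt decomposition:decomposition of GS1}. First I would write, from Proposition~\ref{prop:more than 1 region non-contr:bipartition:decomposition of GSident},
\[
\left|\psi_{\identity}\right\rangle
= {\mu_\textnormal{max}}^{-\frac12}{\gamma_\textnormal{max}}^{-\frac12}
\sum_{\mu,\gamma}\left|e_{\mu\gamma}\right\rangle_A\otimes\left|f_{\mu\gamma}\right\rangle_B,
\]
now identifying the two common boundary curves of $A$ and $B$ with $\partial(A_2B_1)$ and $\partial(A_1B_2)$, so that $\mu$ labels $\sigma^z$ loop configurations across the $A_2\mid B_1$ boundary and $\gamma$ those across the $A_1\mid B_2$ boundary, whence $\mu_\textnormal{max}=2^{n_{A_2B_1}-1}$ and $\gamma_\textnormal{max}=2^{n_{A_1B_2}-1}$.

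Next I would unfold each Schmidt vector. As in the proof of Proposition~\ref{prop:one region contr:schmidt decomposition:decomposition of GS1}, $\left|e_{\mu\gamma}\right\rangle_A$ is proportional to the equal–weight superposition of all $\sigma^z$ loop configurations supported in $A$ with the prescribed ``open ends'' $\mu$ on $\partial(A_2B_1)$ and $\gamma$ on $\partial(A_1B_2)$, and I would bipartition it along $A_1\mid A_2$. Modulo deformations by loops internal to $A_1$ or to $A_2$, every such configuration is a product of plaquettes lying across the single curve $\partial(A_1A_2)$; the product of all plaquettes on that curve factorizes into a unitary on $A_1$ times a unitary on $A_2$ (it is a pair of non-contractible rows of $\sigma^z$, one just inside each region), exactly as in Fig.~\ref{fig:Example}(b), so only $n_{A_1A_2}-1$ of them are independent and $\nu$ runs over $\nu_\textnormal{max}=2^{n_{A_1A_2}-1}$ values. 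Cutting each crossing configuration as in Fig.~\ref{fig:Example}(c) gives $\left|e_{\mu\gamma}\right\rangle_A={\nu_\textnormal{max}}^{-1/2}\sum_\nu\left|e_{\nu\gamma}\right\rangle_{A_1}\left|g_{\mu\nu}\right\rangle_{A_2}$; the essential point is the index bookkeeping dictated by the adjacency in Fig.~\ref{fig:Regions2}(b), namely that $A_1$ meets only the curves $\partial(A_1A_2)$ and $\partial(A_1B_2)$ so its string can depend only on $\nu$ and $\gamma$, while $A_2$ meets only $\partial(A_1A_2)$ and $\partial(A_2B_1)$ so its string can depend only on $\nu$ and $\mu$. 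The same argument inside $B$ gives $\left|f_{\mu\gamma}\right\rangle_B={\tau_\textnormal{max}}^{-1/2}\sum_\tau\left|f_{\mu\tau}\right\rangle_{B_1}\left|h_{\tau\gamma}\right\rangle_{B_2}$ with $\tau_\textnormal{max}=2^{n_{B_1B_2}-1}$.

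Substituting the two refinements into the bipartite decomposition and collecting the prefactor ${\mu_\textnormal{max}}^{-1/2}{\gamma_\textnormal{max}}^{-1/2}{\nu_\textnormal{max}}^{-1/2}{\tau_\textnormal{max}}^{-1/2}$ produces the stated formula. Orthonormality of the four families is then immediate: two $\sigma^z$ loop configurations that differ on at least one edge are orthogonal in the $\{\left|+\right\rangle,\left|-\right\rangle\}$ basis, so after normalizing one gets $\left\langle e_{p'q'}\mid e_{pq}\right\rangle_{A_1}=\delta_{pp'}\delta_{qq'}$ and likewise for $f$, $g$, $h$, and the same observation kills all inter–configuration cross terms, so the expression is a genuine Schmidt-type decomposition across each of the relevant cuts.

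The step I expect to be the main obstacle is the homological bookkeeping in the refinement, i.e.\ making sure no Schmidt index is lost or spuriously doubled despite the fact that all four regions and all four cuts are non-contractible. Concretely, one has to check that cutting the annular region $A$ along the single curve $\partial(A_1A_2)$ behaves exactly like the ``contractible'' cut of Proposition~\ref{prop:one region contr:schmidt decomposition:decomposition of GS1}, contributing the factor $2^{n_{A_1A_2}-1}$ and nothing more --- in particular that the non-contractible generator $W^z_1$ present in $\left|\psi_{\identity}\right\rangle$ contributes no Schmidt index to any of the refined cuts, since it is parallel to the curves at hand and can be deformed by internal loops so as to avoid them, and that a $\sigma^z$ loop winding around $A$ and crossing $\partial(A_1A_2)$ produces no extra factor of $2$ for the same reason. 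A secondary, purely combinatorial check is that the only relations among the four boundary plaquette groups are the four ``product along one curve is local'' relations together with ``product of all plaquettes is the identity on the torus,'' so that the total number of summed configurations is indeed $\mu_\textnormal{max}\gamma_\textnormal{max}\nu_\textnormal{max}\tau_\textnormal{max}$. This is the accounting that, carried out instead for a generic ground state, will later separate the long-range contribution from the boundary contribution, so pinning it down carefully here is the crux.
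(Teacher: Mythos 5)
Your proposal is correct and rests on the same mechanism as the paper: classifying the $\sigma^z$ loop configurations by their crossing pattern on each of the four boundary circles, with $2^{n-1}$ independent straddling-plaquette configurations per circle because the product of all plaquettes on a circle is a pair of $\sigma^z$ loops acting locally on either side (absorbable into the local loop sums, since the vertical loop $W^z_1$ belongs to the loop group of $\left|\psi_{\identity}\right\rangle$), followed by cutting each configuration into open strings. The only organizational difference is that you refine the bipartite Schmidt decomposition of Proposition~\ref{prop:more than 1 region non-contr:bipartition:decomposition of GSident} one internal cut at a time, whereas the paper's proof classifies all four boundaries simultaneously (``directly analogous'' to its appendix proofs); the two routes are equivalent, as the paper's own remark following Proposition~\ref{prop:one region contr:4-partite decomposition:decomposition of GS1} makes explicit.
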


\begin{proof}
Directly analogous to the proof for proposition\nobreakspace \ref {prop:more than 1 region non-contr:bipartition:decomposition of GSident}.
\end{proof}

\begin{widetext}
\begin{proposition}\label{prop:more than 1 region non-contr:refined partition:decomposition of general GS}
A generic ground state~$\left|\psi\right\rangle$ has the decomposition
\begin{equation}
\left|\psi\right\rangle
= \sum_{i=\identity,e,m,em} c_i
{\mu_\textnormal{max}}^{-\frac{1}{2}} {\gamma_\textnormal{max}}^{-\frac{1}{2}}
{\nu_\textnormal{max}}^{-\frac{1}{2}} {\tau_\textnormal{max}}^{-\frac{1}{2}}
\sum_{\mu=1}^{\mu_\textnormal{max}} \sum_{\gamma=1}^{\gamma_\textnormal{max}}
\sum_{\nu=1}^{\nu_\textnormal{max}} \sum_{\tau=1}^{\tau_\textnormal{max}}
\left|e^{\left(\psi_i\right)}_{\nu\gamma}\right\rangle_{A_1}
\left|f^{\left(\psi_i\right)}_{\mu\tau}\right\rangle_{B_1}
\left|g^{\left(\psi_i\right)}_{\mu\nu}\right\rangle_{A_2}
\left|h^{\left(\psi_i\right)}_{\tau\gamma}\right\rangle_{B_2}
\end{equation}
where $\mu_\textnormal{max}$, $\gamma_\textnormal{max}$, $\nu_\textnormal{max}$ and $\tau_\textnormal{max}$ are defined in proposition\nobreakspace \ref {prop:more than 1 region non-contr:refined partition:decomposition of GSident}, and
$c_i$ is defined in eq.\nobreakspace \textup {(\ref {eq:more than 1 region non-contr:general GS})}.
Moreover, the states satisfy
$\left\langle e^{\left(\psi_{i'}\right)}_{p'q'} \mid e^{\left(\psi_i\right)}_{pq} \right\rangle
= \left\langle f^{\left(\psi_{i'}\right)}_{p'q'} \mid f^{\left(\psi_i\right)}_{pq} \right\rangle
= \left\langle g^{\left(\psi_{i'}\right)}_{p'q'} \mid g^{\left(\psi_i\right)}_{pq} \right\rangle
= \left\langle h^{\left(\psi_{i'}\right)}_{p'q'} \mid h^{\left(\psi_i\right)}_{pq} \right\rangle
= \delta_{pp'}\delta_{qq'}\delta_{ii'}$.
\end{proposition}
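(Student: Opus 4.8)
The plan is to mirror, for the four-way partition $A_1,A_2,B_1,B_2$ of Fig.~\ref{fig:Regions2}(b), the two-step argument already used for the bipartition in Propositions~\ref{prop:more than 1 region non-contr:bipartition:decomposition of GS(i)} and~\ref{prop:more than 1 region non-contr:bipartition:decomposition of general GS}. By linearity of Eq.~\ref{eq:more than 1 region non-contr:general GS}, it suffices to (i)~obtain the claimed decomposition for each flux eigenstate $\left|\psi_i\right\rangle$, $i=\identity,e,m,em$, and then (ii)~check that the local vectors produced this way are orthonormal \emph{across} flux sectors, so that reassembling them with the coefficients $c_i$ automatically yields a valid decomposition with the normalizations $\mu_\textnormal{max},\gamma_\textnormal{max},\nu_\textnormal{max},\tau_\textnormal{max}$ of Proposition~\ref{prop:more than 1 region non-contr:refined partition:decomposition of GSident}.

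For step~(i) I would start from Proposition~\ref{prop:more than 1 region non-contr:refined partition:decomposition of GSident}, which settles $i=\identity$, and act on it with the appropriate loop operator: $W^x_2$ for $\left|\psi_e\right\rangle$ (Eq.~\ref{eq:psie}), $W^z_2$ for $\left|\psi_m\right\rangle$, and $W^x_2W^z_2$ for $\left|\psi_{em}\right\rangle$ (Eq.~\ref{eq:psiemem}). Having fixed a representative, each such loop is a product of single-site Paulis and hence factorizes over the four regions, e.g. $W^x_2 = {W^x_2}_{A_1}\otimes{W^x_2}_{A_2}\otimes{W^x_2}_{B_1}\otimes{W^x_2}_{B_2}$ (some factors possibly trivial, depending on the routing). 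Pushing each factor onto its region defines $\left|e^{(\psi_i)}_{\nu\gamma}\right\rangle_{A_1} \equiv {W^x_2}_{A_1}\left|e_{\nu\gamma}\right\rangle_{A_1}$, and likewise for the $f,g,h$ families; since these factors are unitary, the within-sector relations $\left\langle e^{(\psi_i)}_{p'q'}\mid e^{(\psi_i)}_{pq}\right\rangle = \delta_{pp'}\delta_{qq'}$ (and their $f,g,h$ analogues) follow immediately from Proposition~\ref{prop:more than 1 region non-contr:refined partition:decomposition of GSident}. Choosing a different representative only conjugates these vectors by local unitaries, so it does not affect any of these inner products.

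The step I expect to be the real work is the \emph{cross-sector} orthogonality, $\left\langle e^{(\psi_{i'})}_{p'q'}\mid e^{(\psi_i)}_{pq}\right\rangle = \delta_{ii'}\delta_{pp'}\delta_{qq'}$ (and the same for $f,g,h$), the point asserted somewhat tersely inside the proof of Proposition~\ref{prop:more than 1 region non-contr:bipartition:decomposition of GS(i)}. The cleanest way to pin it down is to note that each of $A_1,A_2,B_1,B_2$ is non-contractible in the direction along which the $A\mid B$ cuts run, so a representative of each of the commuting loop operators $W^z_1$ and $W^x_1$ (the ones that read off the horizontal flux) can be drawn entirely inside any single one of these subregions; since $W^z_1$ separates $\{\identity,m\}$ from $\{e,em\}$ and $W^x_1$ separates $\{\identity,e\}$ from $\{m,em\}$, together they distinguish all four sectors by a measurement supported in $A_1$ alone (and likewise in $B_1$, $A_2$, $B_2$ for the $f,g,h$ families). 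Equivalently, the open string factors ${W^x_2}_{A_1}$ and ${W^z_2}_{A_1}$ cross those in-region loops once and hence flip their eigenvalues, so $\left|e^{(\psi_{i'})}_{\cdot}\right\rangle_{A_1}$ and $\left|e^{(\psi_i)}_{\cdot}\right\rangle_{A_1}$ lie in orthogonal eigenspaces whenever $i\neq i'$. Substituting the flux-by-flux decompositions from step~(i) into Eq.~\ref{eq:more than 1 region non-contr:general GS}, pulling the four sums and the common prefactor ${\mu_\textnormal{max}}^{-1/2}{\gamma_\textnormal{max}}^{-1/2}{\nu_\textnormal{max}}^{-1/2}{\tau_\textnormal{max}}^{-1/2}$ outside by multilinearity, and recording the orthonormality relations from steps~(i) and~(ii) then gives exactly the stated decomposition, the factor $\delta_{ii'}$ being precisely what prevents the $c_i$ from mixing.
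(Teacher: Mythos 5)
Your proposal is correct and follows essentially the same route as the paper, which proves this proposition simply by declaring it ``directly analogous'' to Propositions~\ref{prop:more than 1 region non-contr:bipartition:decomposition of GS(i)} and~\ref{prop:more than 1 region non-contr:bipartition:decomposition of general GS}: act on the $\left|\psi_{\identity}\right\rangle$ decomposition of Proposition~\ref{prop:more than 1 region non-contr:refined partition:decomposition of GSident} with factorized loop operators, verify orthonormality, and combine linearly with the $c_i$. Your explicit justification of the cross-sector orthogonality via in-region representatives of $W^z_1$ and $W^x_1$ is a point the paper merely asserts, so you have if anything filled in a step the paper leaves implicit.
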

\end{widetext}

\begin{proof}
Directly analogous to the proof for propositions\nobreakspace \ref {prop:more than 1 region non-contr:bipartition:decomposition of GS(i)} and\nobreakspace  \ref {prop:more than 1 region non-contr:bipartition:decomposition of general GS}.
\end{proof}

\subsection{Entanglement negativity}

\subsubsection{Entanglement between $A$ and $B$}

\begin{proposition}
The logarithmic negativity~$E_\mathcal{N}^{A\mid B}$ between regions $A$ and~$B$ for any ground state with definite flux~$\left|\psi_i\right\rangle$ is
\begin{equation}\label{eq:more than 1 region non-contr:bipartition:negativity for GS(i)}
E_\mathcal{N}^{A\mid B}
\left( \left|\psi_i\right\rangle\left\langle\psi_i\right| \right)
= \left( n_{AB}^{\left(1\right)} - 1 \right)
+ \left( n_{AB}^{\left(2\right)} - 1 \right)
\end{equation}
for any $i = \identity,e,m,em$.
\end{proposition}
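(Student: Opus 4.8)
The plan is to mirror, almost verbatim, the argument used for the contractible case in Proposition~\ref{prop:one region contr:negativity:negativity of bipartition}, but starting now from the two-index Schmidt decomposition of Proposition~\ref{prop:more than 1 region non-contr:bipartition:decomposition of GS(i)}. Since $A\cup B$ is a torus in the pure state $\left|\psi_i\right\rangle$, one could in fact shortcut everything by invoking Eq.~\ref{eq:PureStates}: the Schmidt spectrum of $\left|\psi_i\right\rangle$ is flat, with every coefficient equal to $\left(\mu_\textnormal{max}\gamma_\textnormal{max}\right)^{-1/2}$ and multiplicity $\mu_\textnormal{max}\gamma_\textnormal{max}$, so $E_\mathcal{N}^{A\mid B} = S_{1/2}(\rho_A) = 2\log_2\!\sum_\alpha\sqrt{p_\alpha} = \log_2\left(\mu_\textnormal{max}\gamma_\textnormal{max}\right)$, and substituting $\mu_\textnormal{max}=2^{n_{AB}^{(1)}-1}$, $\gamma_\textnormal{max}=2^{n_{AB}^{(2)}-1}$ gives the claim. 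For consistency with the rest of the section, however, I would present the direct partial-transpose computation.

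First I would write $\rho_{AB}=\left|\psi_i\right\rangle\left\langle\psi_i\right|$ and take the partial transposition in the orthonormal basis $\{\left|e^{(\psi_i)}_{\mu\gamma}\right\rangle_A\}$, obtaining
\[
\rho_{AB}^{T_A}=\left(\mu_\textnormal{max}\gamma_\textnormal{max}\right)^{-1}\!\!\sum_{\mu\gamma,\mu'\gamma'}\left|e^{(\psi_i)}_{\mu\gamma}\right\rangle\!\left\langle e^{(\psi_i)}_{\mu'\gamma'}\right|_A\otimes\left|f^{(\psi_i)}_{\mu'\gamma'}\right\rangle\!\left\langle f^{(\psi_i)}_{\mu\gamma}\right|_B .
\]
Next I would square this operator; using the orthonormality relations $\left\langle e^{(\psi_i)}_{p'q'}\mid e^{(\psi_i)}_{pq}\right\rangle=\left\langle f^{(\psi_i)}_{p'q'}\mid f^{(\psi_i)}_{pq}\right\rangle=\delta_{pp'}\delta_{qq'}$ from Proposition~\ref{prop:more than 1 region non-contr:bipartition:decomposition of GS(i)}, the cross terms collapse and $\left(\rho_{AB}^{T_A}\right)^2$ becomes diagonal, equal to $\left(\mu_\textnormal{max}\gamma_\textnormal{max}\right)^{-2}$ times $\sum_{\mu\gamma,\mu'\gamma'}\left|e^{(\psi_i)}_{\mu\gamma}\right\rangle\!\left\langle e^{(\psi_i)}_{\mu\gamma}\right|_A\otimes\left|f^{(\psi_i)}_{\mu'\gamma'}\right\rangle\!\left\langle f^{(\psi_i)}_{\mu'\gamma'}\right|_B$. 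Since this is already a spectral decomposition with non-negative eigenvalues, $\left|\rho_{AB}^{T_A}\right|=\sqrt{\left(\rho_{AB}^{T_A}\right)^2}$ is obtained simply by taking the positive square root of the prefactor, and then $\tr\left|\rho_{AB}^{T_A}\right|=\left(\mu_\textnormal{max}\gamma_\textnormal{max}\right)^{-1}\cdot\left(\mu_\textnormal{max}\gamma_\textnormal{max}\right)^2=\mu_\textnormal{max}\gamma_\textnormal{max}$. Finally, $E_\mathcal{N}^{A\mid B}=\log_2\tr\left|\rho_{AB}^{T_A}\right|=\log_2\!\left(2^{n_{AB}^{(1)}-1}2^{n_{AB}^{(2)}-1}\right)=\left(n_{AB}^{(1)}-1\right)+\left(n_{AB}^{(2)}-1\right)$, as claimed, and the result is manifestly independent of $i$.

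There is no genuine obstacle here: the computation is routine once Proposition~\ref{prop:more than 1 region non-contr:bipartition:decomposition of GS(i)} is in hand, and the only point requiring a word of care is the step identifying $\sqrt{\left(\rho_{AB}^{T_A}\right)^2}$ with $\left|\rho_{AB}^{T_A}\right|$ — this is legitimate precisely because $\left(\rho_{AB}^{T_A}\right)^2$ has been brought to diagonal form with non-negative entries, so taking the entrywise positive square root yields a positive operator whose square is $\left(\rho_{AB}^{T_A}\right)^2$. All the conceptual work sits upstream in establishing the flat, product-structured Schmidt spectrum; the negativity calculation itself just reads it off.
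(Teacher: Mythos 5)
Your proposal is correct and follows essentially the same route as the paper, which likewise takes the flat two-index Schmidt decomposition of $\left|\psi_i\right\rangle$ and repeats the partial-transpose computation of the contractible case to obtain $E_\mathcal{N}^{A\mid B}=\log_2\left(\mu_\textnormal{max}\gamma_\textnormal{max}\right)$ before substituting the definitions of $\mu_\textnormal{max}$ and $\gamma_\textnormal{max}$. You merely write out explicitly the steps the paper delegates by reference, and your remark that the pure-state identity $E_\mathcal{N}=S_{1/2}(\rho_A)$ gives the answer immediately is consistent with the paper's own observations.
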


\begin{proof}
From the Schmidt decomposition of $\left|\psi_i\right\rangle$ in proposition\nobreakspace \ref {prop:more than 1 region non-contr:bipartition:decomposition of GS(i)}, and
following the same procedure as outlined in the proof of proposition\nobreakspace \ref {prop:one region contr:negativity:negativity of bipartition},
we have the negativity as
\begin{equation}
E_\mathcal{N}^{A\mid B}
= \log_2\left( \mu_\textnormal{max} \gamma_\textnormal{max} \right)
.
\end{equation}
The desired result follows directly from the definition of~$\mu_\textnormal{max}$ and~$\gamma_\textnormal{max}$ (proposition\nobreakspace \ref {prop:more than 1 region non-contr:bipartition:decomposition of GSident}).
\end{proof}

\begin{proposition}
The logarithmic negativity~$E_\mathcal{N}^{A\mid B}$ between regions $A$ and~$B$ for any ground state~$\left|\psi\right\rangle$ is
\begin{equation}
E_\mathcal{N}^{A\mid B}
\left( \left|\psi\right\rangle\left\langle\psi\right| \right)
= 2 \log_2\sum_i \left|c_i\right|
+ \left( n_{AB}^{\left(1\right)} - 1 \right)
+ \left( n_{AB}^{\left(2\right)} - 1 \right)
\end{equation}
for any $i = \identity,e,m,em$.
\end{proposition}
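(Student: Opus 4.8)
The plan is to evaluate $\tr\left|\rho_{AB}^{T_A}\right|$ directly from the Schmidt decomposition of the general ground state in Proposition~\ref{prop:more than 1 region non-contr:bipartition:decomposition of general GS}, reusing the scheme of the definite-flux case but now keeping track of cross terms between distinct flux sectors. Setting $\rho_{AB}=\ket{\psi}\bra{\psi}$ and taking the partial transposition in the orthonormal $A$-basis $\{\ket{e^{(\psi_i)}_{\mu\gamma}}_A\}$ gives
\begin{equation}
\rho_{AB}^{T_A}
= \frac{1}{\mu_\textnormal{max}\gamma_\textnormal{max}}
\sum_{i,i'} c_i c_{i'}^{*}
\sum_{\mu\gamma,\mu'\gamma'}
\ket{e^{(\psi_{i'})}_{\mu'\gamma'}}\bra{e^{(\psi_i)}_{\mu\gamma}}_A
\otimes
\ket{f^{(\psi_i)}_{\mu\gamma}}\bra{f^{(\psi_{i'})}_{\mu'\gamma'}}_B .
\end{equation}

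Squaring this, the orthonormality relations of Proposition~\ref{prop:more than 1 region non-contr:bipartition:decomposition of general GS} --- in particular the extra Kronecker $\delta_{ii'}$ in the flux label --- force the product of an $(i,i')$ term with a $(j,j')$ term to vanish unless $j'=i$, $j=i'$ and the remaining labels match, so the surviving contributions reassemble into mutually orthogonal rank-one projectors,
\begin{equation}
\left(\rho_{AB}^{T_A}\right)^{2}
= \frac{1}{\mu_\textnormal{max}^{2}\gamma_\textnormal{max}^{2}}
\sum_{i,i'} |c_i|^{2}|c_{i'}|^{2}
\sum_{\mu\gamma,\mu'\gamma'}
\ket{e^{(\psi_{i'})}_{\mu'\gamma'}}\bra{e^{(\psi_{i'})}_{\mu'\gamma'}}_A
\otimes
\ket{f^{(\psi_i)}_{\mu\gamma}}\bra{f^{(\psi_i)}_{\mu\gamma}}_B .
\end{equation}
Its positive square root is therefore obtained by replacing $|c_i|^{2}|c_{i'}|^{2}/(\mu_\textnormal{max}^{2}\gamma_\textnormal{max}^{2})$ by $|c_i||c_{i'}|/(\mu_\textnormal{max}\gamma_\textnormal{max})$, and each $(i,i')$ sector contributes $(\mu_\textnormal{max}\gamma_\textnormal{max})^{2}$ orthogonal projectors of unit trace. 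Hence $\tr\left|\rho_{AB}^{T_A}\right| = \mu_\textnormal{max}\gamma_\textnormal{max}\sum_{i,i'}|c_i||c_{i'}| = \mu_\textnormal{max}\gamma_\textnormal{max}\left(\sum_i|c_i|\right)^{2}$, and since $\mu_\textnormal{max}\gamma_\textnormal{max}=2^{(n_{AB}^{(1)}-1)+(n_{AB}^{(2)}-1)}$ by Proposition~\ref{prop:more than 1 region non-contr:bipartition:decomposition of GSident}, taking $\log_2$ produces the stated identity.

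I expect the index bookkeeping in the squaring step to be the main obstacle: one has to check carefully that all inter-sector and off-``diagonal'' products cancel, that precisely the terms rebuilding the projectors survive, and that the multiplicity per flux pair is exactly $(\mu_\textnormal{max}\gamma_\textnormal{max})^{2}$. Phases of the $c_i$ cause no difficulty, since the surviving weight is $c_i c_{i'}^{*}c_{i'}c_i^{*}=|c_i|^{2}|c_{i'}|^{2}\ge 0$; equivalently one may absorb the phases into the Schmidt vectors and assume $c_i\ge 0$. As an independent check and a shortcut, note that $\rho_{AB}$ is pure, so by Eq.~\ref{eq:PureStates} $E_\mathcal{N}^{A\mid B}=S_{1/2}(\rho_A)$; reading the Schmidt coefficients $|c_i|(\mu_\textnormal{max}\gamma_\textnormal{max})^{-1/2}$ from Proposition~\ref{prop:more than 1 region non-contr:bipartition:decomposition of general GS} gives $S_{1/2}(\rho_A)=2\log_2\left(\sum_{i,\mu,\gamma}|c_i|(\mu_\textnormal{max}\gamma_\textnormal{max})^{-1/2}\right)=\log_2(\mu_\textnormal{max}\gamma_\textnormal{max})+2\log_2\sum_i|c_i|$, which reproduces the formula and collapses to the previous, definite-flux proposition when a single $|c_i|=1$.
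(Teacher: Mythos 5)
Your proposal is correct and follows essentially the same route as the paper: the paper's proof simply says ``following the previous proof'' to arrive at $E_\mathcal{N}^{A\mid B}=\log_2\bigl[\bigl(\sum_{i,j}|c_i||c_j|\bigr)\mu_\textnormal{max}\gamma_\textnormal{max}\bigr]$ and then uses $\sum_{i,j}|c_i||c_j|=\bigl(\sum_i|c_i|\bigr)^2$, which is exactly the computation you carry out explicitly (your index bookkeeping in the squaring step, using the extra $\delta_{ii'}$ orthogonality, is the correct fleshing-out of that elided step). Your $S_{1/2}$ cross-check via Eq.~\ref{eq:PureStates} is a valid shortcut that the paper only gestures at in a remark.
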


\begin{proof}
From the decomposition of $\left|\psi\right\rangle$ in proposition\nobreakspace \ref {prop:more than 1 region non-contr:bipartition:decomposition of general GS}, and
following the previous proof,
we have the negativity as
\begin{equation}
E_\mathcal{N}^{A\mid B}
= \log_2\left[ \left(\sum_{i,j}\left|c_i\right|\left|c_j\right|\right)
\mu_\textnormal{max} \gamma_\textnormal{max} \right]
.
\end{equation}
The desired result follows directly from the observation that
$\sum_{i,j}\left|c_i\right|\left|c_j\right| = \left(\sum_i\left|c_i\right|\right)^2$.
\end{proof}

\begin{remark}
The logarithmic negativity~$E_\mathcal{N}^{A\mid B}$ for any ground state~$\left|\psi\right\rangle$ is a combination of two entropies:
\begin{align}
&E_\mathcal{N}^{A\mid B}
\left( \left|\psi\right\rangle\left\langle\psi\right| \right)	\nonumber \\
&=
\underbrace{
	2 \log_2\sum_i \left|c_i\right|
}_{\textnormal{new contribution}}
+ \underbrace{
	\left( n_{AB}^{\left(1\right)} - 1 \right)
	+ \left( n_{AB}^{\left(2\right)} - 1 \right)
}_{\textnormal{same as eq.\nobreakspace \textup {(\ref {eq:more than 1 region non-contr:bipartition:negativity for GS(i)})}}}
.
\end{align}
\end{remark}

\subsubsection{Entanglement between $A$ and $B_1$}

\begin{proposition}\label{prop:more than 1 region non-contr:refined partition:negativity:ent between A and B1}
The logarithmic negativity~$E_\mathcal{N}^{A\mid B_1}$ between regions $A$ and~$B_1$ for any ground state is given by
\begin{equation}
E_\mathcal{N}^{A\mid B_1} \left( \rho_{A B_1} \right)
= n_{A_2 B_1} - 1
\end{equation}
where $\rho_{A B_1} = \tr_{B_2} \left|\psi\right\rangle\left\langle\psi\right|$.
\end{proposition}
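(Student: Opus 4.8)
The plan is to run exactly the same machine as in the $E_\mathcal{N}^{A_1\mid B_1}$ calculation of Section \ref{sect:contractible}: start from the four-partite decomposition of $\ket\psi$, trace out $B_2$, take the partial transpose over $A$, square it, observe that the result is already diagonal, take its square root and read off the trace. The input is Proposition \ref{prop:more than 1 region non-contr:refined partition:decomposition of general GS}. Since the bipartition of interest is $A\mid B_1$ with $A=A_1\cup A_2$, the first move is to fuse the $A_1$ and $A_2$ factors by defining
\begin{equation}
\ket{\Phi^{(\psi_i)}_{\mu\gamma}}_A
\equiv {\nu_\textnormal{max}}^{-\frac12}\sum_{\nu=1}^{\nu_\textnormal{max}}
\ket{e^{(\psi_i)}_{\nu\gamma}}_{A_1}\otimes\ket{g^{(\psi_i)}_{\mu\nu}}_{A_2}.
\end{equation}
Using the orthogonality relations of the $e$ and $g$ states from Proposition \ref{prop:more than 1 region non-contr:refined partition:decomposition of general GS} (including the $\delta_{ii'}$ across flux sectors) one checks $\langle\Phi^{(\psi_{i'})}_{\mu'\gamma'}\mid\Phi^{(\psi_i)}_{\mu\gamma}\rangle=\delta_{\mu\mu'}\delta_{\gamma\gamma'}\delta_{ii'}$, so these are orthonormal in $A$, and the generic ground state becomes
\begin{equation}
\ket\psi=\sum_i c_i\,(\mu_\textnormal{max}\gamma_\textnormal{max}\tau_\textnormal{max})^{-\frac12}
\sum_{\mu,\gamma,\tau}\ket{\Phi^{(\psi_i)}_{\mu\gamma}}_A\ket{f^{(\psi_i)}_{\mu\tau}}_{B_1}\ket{h^{(\psi_i)}_{\tau\gamma}}_{B_2}.
\end{equation}

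Next I would trace out $B_2$. Since $\langle h^{(\psi_{i'})}_{\tau'\gamma'}\mid h^{(\psi_i)}_{\tau\gamma}\rangle=\delta_{\tau\tau'}\delta_{\gamma\gamma'}\delta_{ii'}$, the trace pins $\tau=\tau'$, $\gamma=\gamma'$, $i=i'$, giving
\begin{equation}
\rho_{AB_1}=\frac{1}{\mu_\textnormal{max}\gamma_\textnormal{max}\tau_\textnormal{max}}
\sum_i |c_i|^2\sum_{\mu,\mu',\gamma,\tau}
\ket{\Phi^{(\psi_i)}_{\mu\gamma}}\bra{\Phi^{(\psi_i)}_{\mu'\gamma}}_A\otimes\ket{f^{(\psi_i)}_{\mu\tau}}\bra{f^{(\psi_i)}_{\mu'\tau}}_{B_1}.
\end{equation}
The structural point, which is the whole content of the proposition, is now visible: $\gamma$ appears only diagonally on $A$, $\tau$ only diagonally on $B_1$, and the \emph{only} index linking $A$ to $B_1$ is $\mu$, which ranges over $\mu_\textnormal{max}=2^{n_{A_2B_1}-1}$. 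Thus, as far as the $A\mid B_1$ cut is concerned, tracing out $B_2$ has turned $\rho_{AB_1}$ into a direct sum over the flux sectors $i$ of operators with the same shape as the $\rho_{A_1B_1}$ of Section \ref{sect:contractible}, each weighted by $|c_i|^2$.

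Then I would take the partial transpose $T_A$ in the orthonormal $\{\ket{\Phi^{(\psi_i)}_{\mu\gamma}}\}$ basis, which sends $\ket{\Phi^{(\psi_i)}_{\mu\gamma}}\bra{\Phi^{(\psi_i)}_{\mu'\gamma}}\mapsto\ket{\Phi^{(\psi_i)}_{\mu'\gamma}}\bra{\Phi^{(\psi_i)}_{\mu\gamma}}$, and square the result. Collapsing the inner products with the orthogonality of the $\Phi$ and $f$ states (over all of $i,\mu,\gamma,\tau$) yields
\begin{equation}
\left(\rho_{AB_1}^{T_A}\right)^2=\frac{1}{(\mu_\textnormal{max}\gamma_\textnormal{max}\tau_\textnormal{max})^2}\sum_i |c_i|^4\,\Pi^{A}_i\otimes\Pi^{B_1}_i,
\end{equation}
where $\Pi^A_i=\sum_{\mu,\gamma}\ket{\Phi^{(\psi_i)}_{\mu\gamma}}\bra{\Phi^{(\psi_i)}_{\mu\gamma}}$ and $\Pi^{B_1}_i=\sum_{\mu,\tau}\ket{f^{(\psi_i)}_{\mu\tau}}\bra{f^{(\psi_i)}_{\mu\tau}}$ are projectors of ranks $\mu_\textnormal{max}\gamma_\textnormal{max}$ and $\mu_\textnormal{max}\tau_\textnormal{max}$, mutually orthogonal across $i$. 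Hence this is already an eigenvalue decomposition, so $|\rho_{AB_1}^{T_A}|=(\mu_\textnormal{max}\gamma_\textnormal{max}\tau_\textnormal{max})^{-1}\sum_i|c_i|^2\,\Pi^A_i\otimes\Pi^{B_1}_i$, and
\begin{equation}
\tr\left|\rho_{AB_1}^{T_A}\right|=\frac{\sum_i|c_i|^2\,(\mu_\textnormal{max}\gamma_\textnormal{max})(\mu_\textnormal{max}\tau_\textnormal{max})}{\mu_\textnormal{max}\gamma_\textnormal{max}\tau_\textnormal{max}}=\mu_\textnormal{max}\sum_i|c_i|^2=\mu_\textnormal{max},
\end{equation}
using $\sum_i|c_i|^2=1$. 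Therefore $E_\mathcal{N}^{A\mid B_1}=\log_2\mu_\textnormal{max}=n_{A_2B_1}-1$.

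The routine part is the index bookkeeping in the square. The one step worth writing out carefully — and the real obstacle — is the decoupling of the flux sectors: I must check that the different-$i$ blocks of $\rho_{AB_1}^{T_A}$ are genuinely orthogonal so that no $|c_i||c_j|$ interference survives in $|\rho_{AB_1}^{T_A}|$. This is precisely the difference with the pure-state $A\mid B$ negativity, where the off-diagonal $\mu$-structure was shared between $A$ and $B$ and produced the extra $2\log_2\sum_i|c_i|$ term; here, tracing out the non-contractible region $B_2$ makes the sectors block-diagonal, their ranks sum to $\mu_\textnormal{max}\gamma_\textnormal{max}\tau_\textnormal{max}\sum_i|c_i|^2$ after normalization, and all dependence on the $c_i$ — i.e. the long-range entanglement — is erased.
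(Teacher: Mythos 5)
Your proposal is correct and follows essentially the same route as the paper's appendix proof: start from the refined four-partite decomposition, trace out $B_2$ (which kills the inter-flux coherences via the orthogonality of the $h$ states), partially transpose over $A$, square, and read off $\tr\lvert\rho_{AB_1}^{T_A}\rvert=\mu_\textnormal{max}\sum_i\lvert c_i\rvert^2=\mu_\textnormal{max}$. The only difference is cosmetic — you pre-fuse $A_1$ and $A_2$ into the orthonormal $\ket{\Phi^{(\psi_i)}_{\mu\gamma}}_A$ basis, which tidies the index bookkeeping, whereas the paper carries the three tensor factors explicitly.
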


\begin{proof}
See appendix.
\end{proof}

\subsubsection{Entanglement between $A_1$ and $B_1$}

\begin{proposition}
The logarithmic negativity~$E_\mathcal{N}^{A_1\mid B_1}$ between regions $A_1$ and~$B_1$ for any ground state is given by
\begin{equation}
E_\mathcal{N}^{A_1\mid B_1} \left( \rho_{A_1 B_1} \right)
= 0
\end{equation}
where $\rho_{A_1 B_1} = \tr_{A_2 B_2} \left|\psi\right\rangle\left\langle\psi\right|$.
\end{proposition}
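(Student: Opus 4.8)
The plan is to read $\rho_{A_1B_1}=\tr_{A_2B_2}\ketbra{\psi}{\psi}$ straight off the decomposition of the generic ground state supplied by Proposition~\ref{prop:more than 1 region non-contr:refined partition:decomposition of general GS}, and then observe that the state we obtain is separable. The one thing to keep firmly in mind is the index-to-region bookkeeping in that decomposition: $A_1$ carries the label pair $(\nu,\gamma)$, $B_1$ carries $(\mu,\tau)$, $A_2$ carries $(\mu,\nu)$, and $B_2$ carries $(\tau,\gamma)$. Hence $A_1$ and $B_1$ share \emph{no} label at all; the only objects that link an $A_1$-label to a $B_1$-label are the $A_2$ states (which tie $B_1$'s $\mu$ to $A_1$'s $\nu$) and the $B_2$ states (which tie $B_1$'s $\tau$ to $A_1$'s $\gamma$)—and both of these regions are exactly what is being traced out.

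First I would carry out the two partial traces using the orthonormality relations of Proposition~\ref{prop:more than 1 region non-contr:refined partition:decomposition of general GS}: $\tr_{A_2}\ketbra{g^{(\psi_i)}_{\mu\nu}}{g^{(\psi_{i'})}_{\mu'\nu'}}=\delta_{\mu\mu'}\delta_{\nu\nu'}\delta_{ii'}$ and likewise $\tr_{B_2}$ of the $h$-factors gives $\delta_{\tau\tau'}\delta_{\gamma\gamma'}\delta_{ii'}$. These Kronecker deltas collapse the double sum over $(\mu\gamma\nu\tau)$ and its primed copy to a single sum and, crucially, annihilate every flux cross-term with $i\neq i'$. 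What survives is the manifestly product-per-sector form
\begin{equation}
\rho_{A_1B_1}
= \tr_{A_2B_2}\ketbra{\psi}{\psi}
= \sum_{i=\identity,e,m,em} |c_i|^2 \; \rho_{A_1}^{(i)}\otimes\rho_{B_1}^{(i)},
\end{equation}
where $\rho_{A_1}^{(i)}=(\nu_\textnormal{max}\gamma_\textnormal{max})^{-1}\sum_{\nu,\gamma}\ketbra{e^{(\psi_i)}_{\nu\gamma}}{e^{(\psi_i)}_{\nu\gamma}}_{A_1}$ and $\rho_{B_1}^{(i)}=(\mu_\textnormal{max}\tau_\textnormal{max})^{-1}\sum_{\mu,\tau}\ketbra{f^{(\psi_i)}_{\mu\tau}}{f^{(\psi_i)}_{\mu\tau}}_{B_1}$ are, by the orthonormality of the families $\{\ket{e^{(\psi_i)}_{\nu\gamma}}\}$ and $\{\ket{f^{(\psi_i)}_{\mu\tau}}\}$ at fixed $i$, uniformly normalized projectors onto their respective spans, hence legitimate density operators.

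At this point the conclusion is immediate: $\rho_{A_1B_1}$ is a convex combination (weights $|c_i|^2\geq 0$, $\sum_i|c_i|^2=1$) of product states, i.e.\ separable, so its partial transpose $\rho_{A_1B_1}^{T_A}=\sum_i|c_i|^2\,(\rho_{A_1}^{(i)})^{T}\otimes\rho_{B_1}^{(i)}$ is still a positive operator, the transpose of a positive operator being positive. Therefore $\tr\bigl|\rho_{A_1B_1}^{T_A}\bigr|=\tr\rho_{A_1B_1}^{T_A}=\tr\rho_{A_1B_1}=1$ and $E_\mathcal{N}^{A_1\mid B_1}(\rho_{A_1B_1})=\log_2 1=0$; consistently, the Peres criterion (Eq.~\ref{eq:Peres}) detects no entanglement between $A_1$ and $B_1$. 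The only step demanding genuine care—and what I would flag as the main (if modest) obstacle—is getting the index-to-region assignment exactly right and checking that tracing out \emph{both} $A_2$ and $B_2$ is what severs the link between $A_1$ and $B_1$: had only one of them been removed, or had the $g,h$ overlaps failed to carry the factor $\delta_{ii'}$, the surviving operator would in general be entangled (indeed Proposition~\ref{prop:more than 1 region non-contr:refined partition:negativity:ent between A and B1} shows $E_\mathcal{N}^{A\mid B_1}$ is nonzero). Beyond this bookkeeping there is no residual computation, and the vanishing is precisely the statement that the long-range contribution to the entanglement is destroyed once a non-contractible bulk region of both $A$ and $B$ has been traced out.
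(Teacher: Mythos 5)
Your proposal is correct and follows essentially the same route as the paper: trace out $A_2$ and $B_2$ using the orthonormality relations of Proposition~\ref{prop:more than 1 region non-contr:refined partition:decomposition of general GS} to obtain $\rho_{A_1B_1}=\sum_i|c_i|^2\,\rho_{A_1}^{(i)}\otimes\rho_{B_1}^{(i)}$, and conclude that this manifestly separable state has vanishing logarithmic negativity. Your added remark that separability implies a positive partial transpose and hence $\tr\bigl|\rho_{A_1B_1}^{T_A}\bigr|=1$ makes explicit a step the paper leaves implicit, but the argument is the same.
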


\begin{proof}
We have the decomposition of the generic ground state from proposition\nobreakspace \ref {prop:more than 1 region non-contr:refined partition:decomposition of general GS}.
So, the reduced density matrix over~$A$ and~$B_1$ is
\begin{align}
&\rho_{A_1 B_1}	\nonumber \\
&= \tr_{A_2 B_2} \left|\psi\right\rangle\left\langle\psi\right|	\nonumber \\
&=
\sum_{i=\identity,e,m,em} \left|c_i\right|^2
{\mu_\textnormal{max}}^{-1} {\gamma_\textnormal{max}}^{-1}
{\nu_\textnormal{max}}^{-1} {\tau_\textnormal{max}}^{-1}	\nonumber \\
&\quad
\sum_{\nu=1}^{\nu_\textnormal{max}}
\sum_{\gamma=1}^{\gamma_\textnormal{max}}
\left|e^{\left(\psi_i\right)}_{\nu\gamma}\right\rangle\left\langle e^{\left(\psi_i\right)}_{\nu\gamma}\right|_{A_1}
 \otimes
\sum_{\mu=1}^{\mu_\textnormal{max}}
\sum_{\tau=1}^{\tau_\textnormal{max}}
\left|f^{\left(\psi_i\right)}_{\mu\tau}\right\rangle\left\langle f^{\left(\psi_i\right)}_{\mu\tau}\right|_{B_1}
\end{align}
This state is explicitly separable, that is unentangled, and therefore
\begin{equation}
E_\mathcal{N}^{A_1\mid B_1} \left( \rho_{A_1 B_1} \right)
= 0
\end{equation}
as required.
\end{proof}

\section{Discussion}

\begin{figure}[t]
\includegraphics[width=8.6cm]{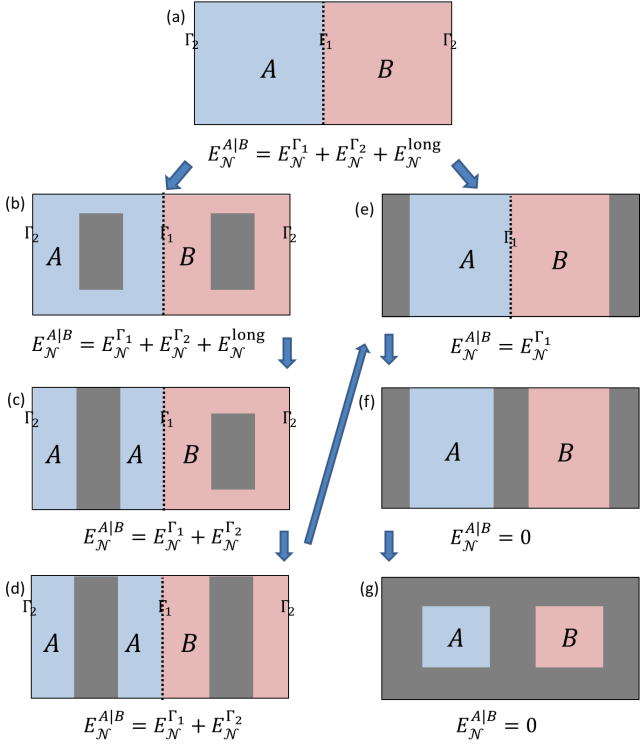}
\caption{
Sequence of partitions of the torus into regions $A$ and $B$ and (possibly, in grey) a complementary region $C$ that is traced out. The negativity allows us to explicitly decompose the entanglement between $A$ and $B$ in setting (a) into boundary contributions (which originate near the boundary and are thus insensitive to the tracing out of degrees of freedom away from those boundaries) and long-range contributions. Only the entanglement of $\rho_{AB}$ in settings (a) and (b) receives long-range contributions. The mixed state $\rho_{AB}$ for settings (f) and (g) is not entangled.
}
\label{fig:MoreRegions}
\end{figure}

\subsection{Boundary and long-range entanglement}

Derivations analogous to the ones presented in the two preceding sections lead to an analytical expression for the negativity for a large variety of settings. Fig. \ref{fig:MoreRegions} displays two sequences of such settings. [Notice that several settings in Fig. \ref{fig:MoreRegions} are equivalent to the ones we already considered above].
To ease the notation, in this last section we divide a torus into three (changing) regions $A$, $B$, and $C$, and trace out region $C$. Then we study the negativity $E_{\mathcal{N}}^{A|B}$ of $\rho_{AB}$. As in the previous section, the torus is in an arbitrary ground state $\ket{\psi} = \sum_{i} c_i \ket{\psi_i}$, where the index $i$ labels the possible anyonic fluxes, $i=I,e,m,em$, inside the torus in the horizontal direction,.

The general expression for the negativity for the settings considered in Fig. \ref{fig:MoreRegions} confirms that the entanglement between $A$ and $B$ is made of two types of contributions: entanglement directly associated to each of the boundaries $\Gamma_m$ ($m=1,2$) between regions $A$ and $B$, and long-range entanglement,
\begin{equation}\label{eq:discussion1}
    E_{\mathcal{N}}^{A|B} = \sum_{m} E_{\mathcal{N}}^{\Gamma_m} + E_{\mathcal{N}}^{\mbox{\tiny long}}.
\end{equation}
Here, each boundary $\Gamma_m$ between $A$ and $B$ contributes an amount $E_{\mathcal{N}}^{\Gamma_m}$ that is independent of the particular choice of ground state $\ket{\Psi}$ and that further breaks into the sum of a term proportional to the size $|\Gamma_m|$ of the boundary, and a universal, topological correction $E^{\gamma}_{\mathcal{N}} \equiv - \gamma = - 1$ \cite{Hamma}, where $\gamma \equiv \log_2 \mathcal{D}$ and $\mathcal{D}=2$ is the total quantum dimension of the toric code \cite{TEE1, TEE2}. In turn, the long-range contribution $E_{\mathcal{N}}^{\mbox{\tiny long}}$ depends on the choice of ground state,
\begin{equation}\label{eq:discussion3}
    E_{\mathcal{N}}^{\mbox{\tiny long}} = 2 \log_2 \left( \sum_{i} |c_i| \right) = S_{1/2}(\{|c_i|^2\}).
\end{equation}

We see in Fig. \ref{fig:MoreRegions} (a),(b),(c),(d),(e) that each boundary contribution $E_{\mathcal{N}}^{\Gamma_m}$ is robust to tracing out degrees of freedom away from that boundary (which is why we refer to them as boundary contributions in the first place). Instead, the long-range contribution $E_{\mathcal{N}}^{\mbox{\tiny long}}$ disappears as soon the traced-out region $C$ closes a nontrivial loop in the vertical direction. In this case the different fluxes $i=1,e,m,em$ can be measured in $C$, causing loss of flux coherence, so that as far as anyon fluxes are concerned, $\rho_{AB}$ only contains, at most, classical correlations. Finally, Fig. \ref{fig:MoreRegions} (f), (g) show that if $A$ and $B$ do not share any boundary, then the negativity of $\rho_{AB}$ vanishes.
In particular, one can further show that $\rho_{AB}$ in Fig. \ref{fig:MoreRegions}(f) is in a classically correlated mixed state,
\begin{equation}\label{eq:classical}
   \rho_{AB} = \sum_{i} |c_i|^2 \rho_A^{(i)} \otimes \rho_B^{(i)},
\end{equation}
with
\begin{eqnarray}\label{eq:classical2}
    \tr(\rho_A^{(i)}\rho_A^{(j)}) &=& \delta_{ij}\tr\left((\rho_A^{(i)})^2\right),\\
    \tr(\rho_B^{(i)}\rho_B^{(j)}) &=& \delta_{ij}\tr\left((\rho_B^{(i)})^2\right),
\end{eqnarray}
because one can still measure the flux inside the torus by means of vertical Wilson loops in $A$ and vertical Wilson loops in $B$, and the outcome of such measurements must coincide.
On the other hand, $\rho_{AB}$ in Fig. \ref{fig:MoreRegions}(f) is in a product (i.e. uncorrelated) state independent of the choice of ground state $\ket{\Psi}$,
\begin{equation}\label{eq:classical3}
   \rho_{AB} = \rho_A \otimes \rho_B.
\end{equation}

\subsection{Beyond the toric code}

Let us now consider a (translation invariant) lattice model for a generic topologically ordered phase. This phase will be characterized by some emergent anyonic model, consisting of $N$ anyon types, $i=1,\cdots,N$, corresponding quantum dimensions $d_i$, and total quantum dimension $\mathcal{D} = \sqrt{\sum_i d_i^2}$. For concreteness, let us consider the bipartition of a torus into two non-contractible parts $A$ and $B$ as in Fig. \ref{fig:MoreRegions}(a). We consider the system to be in a generic ground state $\ket{\psi} = \sum_{i=1}^N c_i \ket{\psi_i}$, where $\ket{\psi_i}$ is the ground state with well-defined flux $i$ propagating inside the torus in the horizontal direction. We assume that the width and the length of each site is much larger than the correlation length. The reduced density matrix for region $A$ is then
\begin{equation}\label{eq:Ageneral}
    \rho_A = \bigoplus_{i=1}^N |c_i|^2 \rho_A^{(i)},
\end{equation}
where $\rho_A^{(i)}$ is the reduced density matrix for region $A$ when the ground state is $\ket{\psi_i}$. In this case, the von Neumann entropy reads \cite{additivity}
\begin{equation}\label{eq:Ageneral2}
    S(\rho_A) = \sum_i |c_i|^2 S(\rho_A^{(i)}) - \sum_{i} |c_i|^2 \log_2 (|c_i|^2),
\end{equation}
where we expect
\begin{equation}\label{eq:Ageneral3}
    S(\rho_A^{(i)}) = (a|\Gamma_1| - \gamma_i) + (a|\Gamma_2| - \gamma_i).
\end{equation}
Here $a$ is a non-universal constant, $|\Gamma_m|$ is the size of the boundary $\Gamma_m$, and $\gamma_i = \log_2 (\mathcal{D}/d_i)$ \cite{TEE1}. Putting the last two equations together we arrive at
\begin{equation}\label{eq:Ageneral4}
    S(\rho_A) = \sum_{m=1}^2\left(a|\Gamma_m| - \bar{\gamma} \right) + S(\{|c_i|^2\}),
\end{equation}
where $\bar{\gamma}\equiv \sum_i |c_i|^2 \gamma_i$. This expression is analogous to Eq. \ref{eq:discussion1} for the logarithmic negativity, and differentiates boundary contributions from long-range contributions to the entanglement between regions $A$ and $B$ in Fig. \ref{fig:MoreRegions}(a). If we now fix the flux inside the torus, then for the Renyi entropy $S_{1/2}$, and thus the logarithmic negativity, we expect
\begin{equation}\label{eq:AgeneralNeg}
    E_{\mathcal{N}}^{A|B} = S_{1/2}(\rho_{A}^{(i)}) = \sum_{m} (a' |\Gamma_m| - \gamma_i),
\end{equation}
where $a'$ is another non-universal constant. The Renyi entropy should again remain unchanged when we trace out degrees of freedom in the bulk of regions $A$ and $B$ (at a distance from the boundaries $\Gamma_1$ and $\Gamma_2$ much larger than the correlation length), showing that the entanglement between $A$ and $B$ originates near the two boundaries between $A$ and $B$.

However, we are not able to generalize Eq. \ref{eq:discussion1} to apply for a \textit{generic} anyon model \cite{additivity}. This is still possible, nevertheless, for those cases where the spectrum of $\rho_A^{(i)}$ is independent of the anyonic flux $i$ (which, in particular, requires that the anyon model be Abelian, with all quantum dimensions $d_i=1$, which implies $\gamma_i = \gamma$). Indeed, then we have that additivity of the Renyi entropies implies, for a generic ground state $\ket{\psi}$, that \cite{additivity}
\begin{equation}\label{eq:RenyiHalf}
   E_{\mathcal{N}}^{A|B} = S_{1/2} (\rho_A) = S_{1/2}(\rho_A^{(1)}) + S_{1/2}(|c_i|^2),
\end{equation}
where $S_{1/2} (\rho_A^{(1)}) = \sum_{m} (a' |\Gamma_m| - \gamma)$ are boundary contributions expected to be robust against tracing out of bulk degrees of freedom in the interior of $A$ and $B$, whereas the long-range term $S_{1/2}(|c_i|^2)$ is expected to disappear when a non-contractible region inside $A$ or $B$ is traced out.

\section{Conclusions}

In conclusion, in this paper we have presented analytical calculations of the entanglement negativity $E_{\mathcal{N}}$ in the ground state of the toric code model for a variety of settings, and have used them to explore the structure of entanglement in a topologically ordered system. We have seen that the entanglement of a region $A$ and the rest of the system $B$ is made of boundary contributions that entangle degrees of freedom near the boundary between $A$ and $B$; and (possibly) of a long-range contribution. The later appears when through non-contractible regions $A$ and $B$ there is a flux corresponding to a linear combination of different anyon types.

\begin{acknowledgments}
The authors thank Lukasz Cincio for numerically corroborating the validity of the analytical expressions for some of the calculations presented in this paper, and Oliver Buerschaper and Alioscia Hamma for discussions.

After this work was completed, private communication with C. Castelnovo revealed that he had independently arrived at similar results, see C. Castelnovo, arXiv:1306.4990.

\end{acknowledgments}



\appendix
\section{Extension of the logarithmic negativity}
\label{sec:appd:ext of log neg}

Consider the quantity
\begin{equation}
\mathcal{N}_\alpha^+ \left(\rho\right)
= \frac{1}{2\left(1-\alpha\right)}
\log_2 \tr \left|\rho^{\top_A}\right|^{2\alpha}
\end{equation}
where $\rho$ is any density matrix and $\alpha\in\mathbb{R}$.
For $\alpha = 1/2$, this is exactly the logarithmic negativity $\mathcal{N}_{1/2}^+ = E_\mathcal{N}$.
Interestingly, for any pure state $\rho$, we have the relation that
\begin{equation}
S_\alpha \left(\rho_A\right)
= \mathcal{N}_\alpha^+\left(\rho\right)
\end{equation}
where $S_\alpha \left(\rho_A\right)$ denotes the R\'{e}nyi entropy of order~$\alpha$ of the reduced density matrix over $A$.
Thus, in the case of pure states, we can recover all R\'{e}nyi entropies by considering these negativity-like quantities.

Similarly, the quantity
\begin{equation}
\mathcal{N}_\alpha^- \left(\rho\right)
= \frac{1}{\left(1-2\alpha\right)}
\log_2 \tr\left[ \mathrm{sign}\left(\rho^{\top_A}\right) \left|\rho^{\top_A}\right|^{2\alpha} \right]
\end{equation}
can be shown to satisfy
\begin{equation}
S_{2\alpha} \left(\rho_A\right)
= \mathcal{N}_\alpha^-\left(\rho\right)
\end{equation}
in the case where $\rho$ is a pure state.
However, it is still an open question as to whether these quantity can be extended to mixed states such that it is an entanglement monotone.

\section{Some proofs}

\begin{proof}[Proof for proposition\nobreakspace \ref {prop:one region contr:4-partite decomposition:decomposition of GS1}]
Following the arguments outlined in the proof of proposition\nobreakspace \ref {prop:one region contr:schmidt decomposition:decomposition of GS1},
we first consider the ground state~$\left|\psi_0\right\rangle$ in Eq. \ref{eq:psi0}.

\begin{widetext}
The ground state~$\left|\psi_0\right\rangle$ can be written as
\begin{align}
\left|\psi_0\right\rangle
=&\; N
\left(\sum\textnormal{all loops of $\sigma^z$ operators in $A_1\cup A_2\cup B_1\cup B_2$}\right)
\left|+\cdots+\right\rangle	\nonumber \\
=&\; N
\left(\sum_\textnormal{all} \textnormal{loops of $\sigma^z$ operators crossing the boundary $A_1\mid B_1$ up to equivalance class}\right)_{A_1 B_1}	\nonumber \\
&\cdot \left[
\left(\sum_\textnormal{all} \textnormal{loops crossing $A_1\mid A_2$ up to equiv}\right)_{A_1 A_2}
\cdot
\left(\sum_\textnormal{all} \textnormal{loops in $A_1$}\right)_{A_1} \left|+\right\rangle_{A_1}
\otimes
\left(\sum_\textnormal{all} \textnormal{loops in $A_2$}\right)_{A_2} \left|+\right\rangle_{A_2}
\right]	\nonumber \\
&\cdot \left[
\left(\sum_\textnormal{all} \textnormal{loops crossing $B_1\mid B_2$ up to equiv}\right)_{B_1 B_2}
\cdot
\left(\sum_\textnormal{all} \textnormal{loops in $B_1$}\right)_{B_1} \left|+\right\rangle_{B_1}
\otimes
\left(\sum_\textnormal{all} \textnormal{loops in $B_2$}\right)_{B_2} \left|+\right\rangle_{B_2}
\right]
\end{align}
where the equivalence class is such that 2 loops are equivalent if one can be deformed into the other by loop operators that individually act upon one region only.

As before, for the loops crossing some boundary, we can consider only loops made up of plaquette operators on that boundary.
\begin{align}
&\left|\psi_0\right\rangle	\nonumber \\
&=\; N
\left(\sum^{\mu_\textnormal{max}}_{\mu=1} \textnormal{prod of plaquette ops on $A_1\mid B_1$, }P_\mu\right)_{AB}	 \nonumber \\
&\quad\cdot \left[
\left(\sum^{\nu_\textnormal{max}}_{\nu=1} \textnormal{prod of plaquette ops on $A_1\mid A_2$, }Q_\mu\right)_{A_1 A_2}
\cdot
\left(\sum_\textnormal{all} \textnormal{loops in $A_1$}\right)_{A_1} \left|+\right\rangle_{A_1}
\otimes
\left(\sum_\textnormal{all} \textnormal{loops in $A_2$}\right)_{A_2} \left|+\right\rangle_{A_2}
\right]	\nonumber \\
&\quad\cdot \left[
\left(\sum^{\tau_\textnormal{max}}_{\tau=1} \textnormal{prod of plaquette ops on $B_1\mid B_2$, }R_\tau\right)_{B_1 B_2}
\cdot
\left(\sum_\textnormal{all} \textnormal{loops in $B_1$}\right)_{B_1} \left|+\right\rangle_{B_1}
\otimes
\left(\sum_\textnormal{all} \textnormal{loops in $B_2$}\right)_{B_2} \left|+\right\rangle_{B_2}
\right]
\end{align}
\end{widetext}

Breaking up the operators $P$, $Q$ and $R$ into two parts where each individually act only on one partition, and rewriting the resulting state as $\left|e_{\mu\nu}\right\rangle_{A_1}$, $\left|f^{\left(\psi\right)}_{\mu\tau}\right\rangle_{B_1}$, $\left|g_\nu\right\rangle_{A_2}$, $\left|h_\tau\right\rangle_{B_2}$ gives the desired result for the ground state~$\left|\psi_0\right\rangle$ up to the normalisation,
\begin{equation}
\left|\psi_0\right\rangle
\propto \sum_\mu
\left( \sum_\nu \left|e_{\mu\nu}\right\rangle_{A_1} \left|g_\nu\right\rangle_{A_2} \right)
\left( \sum_\tau \left|f^1_{\mu\tau}\right\rangle_{B_1} \left|h_\tau\right\rangle_{B_2} \right)
.
\end{equation}
Normalising the state such that $\left\langle\psi_0\mid\psi_0\right\rangle = 1$ fixes the constants and completes the proof.
\end{proof}

\begin{proof}[Proof for proposition\nobreakspace \ref {prop:more than 1 region non-contr:bipartition:decomposition of GSident}]
First consider the ground state~$\left|\psi_{\identity}\right\rangle$.

\begin{widetext}
The ground state~$\left|\psi_{\identity}\right\rangle$ can be written as
\begin{align}
\left|\psi_{\identity}\right\rangle
=&\; N
\left(\sum\textnormal{all loops in $A\cup B$ that are a composition of contractible loops and the vertical loop}\right)
\left|+\cdots+\right\rangle	\nonumber \\
=&\; N
\left(\sum\textnormal{all loops as above crossing the boundary $A\mid B$ up to equivalance class}\right)	\nonumber \\
&\cdot
\left(\sum\textnormal{all loops of $\sigma^z$ operators in $A$}\right)_A \left|+\cdots+\right\rangle_A
\otimes
\left(\sum\textnormal{all loops of $\sigma^z$ operators in $B$}\right)_B \left|+\cdots+\right\rangle_B
\end{align}
where the equivalence class is such that 2 loops are equivalent if one can be deformed into the other by loop operators in $A$~and/or~$B$ only.
Note that the superposition of loop operators in $\left|\psi_{\identity}\right\rangle$ does not include the horizontal non-contractible loop.
Also note that `all loops of $\sigma^z$ operators in $A$' (and $B$) still refer to both contractible and non-contractible loops in $A$ (as well as $B$).

Thus we can consider only loops made up of plaquette operators on the boundary.
\begin{align}
\left|\psi_{\identity}\right\rangle
&=\; N
\left(\sum^{\mu_\textnormal{max}}_{\mu=1} \textnormal{prod of plaquette ops on boundary~$1$, }P_\mu\right)
\left(\sum^{\gamma_\textnormal{max}}_{\gamma=1} \textnormal{prod of plaquette ops on boundary~$2$, }Q_\gamma\right)	 \nonumber \\
&\quad\quad \cdot
\left(\sum\textnormal{all loops in A}\right)_A \left|+\cdots+\right\rangle_A
\otimes
\left(\sum\textnormal{all loops in B}\right)_B \left|+\cdots+\right\rangle_B
\end{align}
\end{widetext}

Breaking up the operators $P$ and $Q$ into two parts where each individually act only on one partition, and
rewriting the resulting state as $\left|e_{\mu\gamma}\right\rangle_A$, $\left|f_{\mu\gamma}\right\rangle_B$ gives the desired result for the ground state~$\left|\psi_{\identity}\right\rangle$ up to the normalisation $N$.
Normalising the state such that $\left\langle\psi_{\identity}\mid\psi_{\identity}\right\rangle = 1$ fixes the constants and completes the proof.
\end{proof}

\begin{proof}[Proof of proposition\nobreakspace \ref {prop:more than 1 region non-contr:refined partition:negativity:ent between A and B1}]
We have the decomposition of the generic ground state from proposition\nobreakspace \ref {prop:more than 1 region non-contr:refined partition:decomposition of general GS}.
So, the reduced density matrix over~$A$ and~$B_1$ is
\begin{align}
&\rho_{A B_1}	\nonumber \\
&= \tr_{B_2} \left|\psi\right\rangle\left\langle\psi\right|	\nonumber \\
&=
\sum_{i=\identity,e,m,em} \left|c_i\right|^2
{\mu_\textnormal{max}}^{-1} {\gamma_\textnormal{max}}^{-1}
{\nu_\textnormal{max}}^{-1} {\tau_\textnormal{max}}^{-1}	\nonumber \\
&\quad\quad
\sum_{\mu,\mu'=1}^{\tau_\textnormal{max}}
\sum_{\gamma=1}^{\tau_\textnormal{max}}
\sum_{\nu,\nu'=1}^{\tau_\textnormal{max}}
\sum_{\tau=1}^{\tau_\textnormal{max}}
\left|e^{\left(\psi_i\right)}_{\nu\gamma}\right\rangle\left\langle e^{\left(\psi_i\right)}_{\nu'\gamma}\right|_{A_1}	 \nonumber \\
&\quad\quad\quad\quad\quad\quad\quad
\otimes
\left|f^{\left(\psi_i\right)}_{\mu\tau}\right\rangle\left\langle f^{\left(\psi_i\right)}_{\mu'\tau}\right|_{B_1}
\otimes
\left|g^{\left(\psi_i\right)}_{\mu\nu}\right\rangle\left\langle g^{\left(\psi_i\right)}_{\mu'\nu'}\right|_{A_2}
\end{align}
Taking the partial transpose and squaring, we have that
\begin{align}
&\left( {\rho_{A B_1}}^{T_A} \right)^2	\nonumber \\
&=
\sum_{i=\identity,e,m,em} \left|c_i\right|^4
{\mu_\textnormal{max}}^{-2} {\gamma_\textnormal{max}}^{-2}
{\nu_\textnormal{max}}^{-2} {\tau_\textnormal{max}}^{-2}	\nonumber \\
&\quad\quad
\sum_{\mu,\mu'=1}^{\tau_\textnormal{max}}
\sum_{\gamma=1}^{\tau_\textnormal{max}}
\sum_{\nu,\nu'=1}^{\tau_\textnormal{max}}
\sum_{\tau=1}^{\tau_\textnormal{max}}
\left|e^{\left(\psi_i\right)}_{\nu'\gamma}\right\rangle\left\langle e^{\left(\psi_i\right)}_{\nu\gamma}\right|_{A_1}	 \nonumber \\
&\quad\quad\quad\quad\quad\quad\quad
\otimes
\left|f^{\left(\psi_i\right)}_{\mu\tau}\right\rangle\left\langle f^{\left(\psi_i\right)}_{\mu\tau}\right|_{B_1}
\otimes
\left|g^{\left(\psi_i\right)}_{\mu'\nu'}\right\rangle\left\langle g^{\left(\psi_i\right)}_{\mu'\nu}\right|_{A_2}
\end{align}
and therefore we have that
\begin{align}
&\left| {\rho_{A B_1}}^{T_A} \right|	\nonumber \\
&= \sqrt{ {{\rho_{A B_1}}^{T_A}}^\dagger {\rho_{A B_1}}^{T_A} }
= \sqrt{ \left( {\rho_{A B_1}}^{T_A} \right)^2 }	\nonumber \\
&=
\sum_{i=\identity,e,m,em} \left|c_i\right|^2
{\mu_\textnormal{max}}^{-1} {\gamma_\textnormal{max}}^{-1}
{\nu_\textnormal{max}}^{-1} {\tau_\textnormal{max}}^{-1}	\nonumber \\
&\quad\quad
\sum_{\mu,\mu'=1}^{\tau_\textnormal{max}}
\sum_{\gamma=1}^{\tau_\textnormal{max}}
\sum_{\nu,\nu'=1}^{\tau_\textnormal{max}}
\sum_{\tau=1}^{\tau_\textnormal{max}}
\left|e^{\left(\psi_i\right)}_{\nu'\gamma}\right\rangle\left\langle e^{\left(\psi_i\right)}_{\nu'\gamma}\right|_{A_1}	 \nonumber \\
&\quad\quad\quad\quad\quad\quad\quad
\otimes
\left|f^{\left(\psi_i\right)}_{\mu\tau}\right\rangle\left\langle f^{\left(\psi_i\right)}_{\mu\tau}\right|_{B_1}
\otimes
\left|g^{\left(\psi_i\right)}_{\mu'\nu'}\right\rangle\left\langle g^{\left(\psi_i\right)}_{\mu'\nu}\right|_{A_2}
\end{align}
where the last equality follows because the state is already in its eigenvalue decomposition.
Therefore we have the negativity as
\begin{align}
&E_\mathcal{N}^{A\mid B_1} \left( \rho_{A B_1} \right)	\nonumber \\
&= \log_2\tr \left| {\rho_{A B_1}}^{T_A} \right|	\nonumber \\
&= \log_2 \left(
\sum_{i=\identity,e,m,em} \left|c_i\right|^2
{\mu_\textnormal{max}}
\right)
	\nonumber \\
&= \log_2 {\mu_\textnormal{max}}
= n_{A_2 B_1} - 1
\end{align}
as required.
\end{proof}

\end{document}